\DeclareMathOperator{\ADF}{ADF}
\DeclareMathOperator{\len}{len}
\newcommand{\C}{{\mathbb C}}
\newcommand{\Z}{{\mathbb Z}}
\newcommand{\conj}[1]{\overline{#1}}
\newcommand{\floor}[1]{\lfloor{#1}\rfloor}
\newcommand{\ceil}[1]{\lceil{#1}\rceil}
\newcommand{\norm}[2]{\Vert{#1}\Vert_{#2}}
\newcommand{\normp}[3]{\norm{#1}{#2}^{#3}}
\newcommand{\normtt}[1]{\normp{#1}{2}{2}}
\newcommand{\normtf}[1]{\normp{#1}{2}{4}}
\newcommand{\normff}[1]{\normp{#1}{4}{4}}
\newtheorem{theorem}{Theorem}[section]
\newtheorem{proposition}[theorem]{Proposition}
\newtheorem{lemma}[theorem]{Lemma}
\newtheorem{corollary}[theorem]{Corollary}
\theoremstyle{definition}
\newtheorem{definition}[theorem]{Definition}
\newtheorem{question}[theorem]{Question}
\title[Rudin-Shapiro-Like Sequences]{Rudin-Shapiro-Like Sequences with \\ Maximum Asymptotic Merit Factor}
\author{Daniel J.~Katz}
\author{Sangman Lee}
\author{Stanislav A.~Trunov}
\thanks{Daniel J.~Katz is with the Department of Mathematics, California State University, Northridge, United States.  Sangman Lee was with the Department of Mathematics, California State University, Northridge, United States, and is now with the Department of Mathematics, University of Colorado Boulder, United States.  Stanislav A.~Trunov was with the Department of Mathematics, California State University, Northridge, United States, and is now with the Department of Mathematics, Kansas State University, United States.  This paper is based upon work of all three authors supported in part by the National Science Foundation under Grant DMS-1500856 and upon work of Daniel J.~Katz supported in part by the National Science Foundation under Grant CCF-1815487.}
\date{14 July 2020}
\begin{document}
\begin{abstract}
Borwein and Mossinghoff investigated the Rudin-Shapiro-like sequences, which are infinite families of binary sequences, usually represented as polynomials.
Each family of Rudin-Shapiro-like sequences is obtained from a starting sequence (which we call the seed) by a recursive construction that doubles the length of the sequence at each step, and many sequences produced in this manner have exceptionally low aperiodic autocorrelation.
Borwein and Mossinghoff showed that the asymptotic autocorrelation merit factor for any such family is at most $3$, and found the seeds of length $40$ or less that produce the maximum asymptotic merit factor of $3$.
The definition of Rudin-Shapiro-like sequences was generalized by Katz, Lee, and Trunov to include sequences with arbitrary complex coefficients, among which are families of low autocorrelation polyphase sequences.
Katz, Lee, and Trunov proved that the maximum asymptotic merit factor is also $3$ for this larger class.
Here we show that a family of such Rudin-Shapiro-like sequences achieves asymptotic merit factor $3$ if and only if the seed is either of length $1$ or is the interleaving of a pair of Golay complementary sequences.
For small seed lengths where this is not possible, the optimal seeds are interleavings of pairs that are as close as possible to being complementary pairs, and  the idea of an almost-complementary pair makes sense of remarkable patterns in previously unexplained data on optimal seeds for binary Rudin-Shapiro-like sequences.
\end{abstract}
\maketitle
\section{Introduction}
This paper concerns families of Rudin-Shapiro-like sequences (usually represented as polynomials) with minimum asymptotic autocorrelation.
In this paper, we identify the polynomial $a(z)=a_0+a_1 z+\cdots+a_d z^d \in \C[z]$ of degree $d$ with the sequence $(a_0,a_1,\ldots,a_d) \in \C^{d+1}$.
The Rudin-Shapiro-like polynomials are a generalization due to Borwein and Mossinghoff \cite{Borwein-Mossinghoff} of Shapiro's polynomials \cite[p.~39]{Shapiro}.
Borwein and Mossinghoff's polynomials are examples of {\it Littlewood polynomials}, that is, polynomials with coefficients in $\{1,-1\}$, which are identified with sequences of terms from $\{1,-1\}$, that is, {\it binary sequences}.
Katz, Lee, and Trunov \cite{Katz-Lee-Trunov} showed that much of Borwein and Mossinghoff's theory has a natural generalization to polynomials with complex coefficients.
With this generalization, a family of Rudin-Shapiro-like polynomials is constructed from a starting polynomial $f_0(z) \in \C[z]$, called the {\it seed}, by applying the recursion
\begin{equation}\label{Abraham}
f_{n+1}(z) = f_n(z) + \sigma_n z^{1+\deg f_n} f_n^\dag(-z),
\end{equation}
where $\sigma_n \in \{-1,1\}$ and where the notation $\dag$ is used to indicate the conjugate reciprocal of a polynomial: if $a(z)=a_0+a_1 z+ \cdots +a_d z^d \in \C[z]$, then $a^\dag(z)=\conj{a_d} + \conj{a_{d-1}} z + \cdots + \conj{a_0} z^d$.
We require the seed $f_0$ to have nonzero constant coefficient so that $f_0^\dag$ has the same degree as $f_0$, and then it follows that
\[
1+\deg f_n=2^n(1+\deg f_0)
\]
for each $n$.
The sign $\sigma_n$ used in the $n$th step of the recursion can be chosen independently of the others, and $\sigma_0,\sigma_1,\ldots$ is called the {\it sign sequence} of the particular recursion used.
The sequence $f_0,f_1,\ldots$ of polynomials so produced is called the {\it stem} obtained from seed $f_0$ and sign sequence $\sigma_0,\sigma_1,\ldots$.
If one chooses $f_0=1$, $\sigma_0=1$ and $\sigma_n=(-1)^{n+1}$ for $n > 0$, then \cite[Theorem 1]{Brillhart-Carlitz} implies that the stem $f_0,f_1,\ldots$ one obtains is precisely Shapiro's original family of polynomials \cite[p.~39]{Shapiro}.

Since we identify the polynomial $a(z)=a_0+a_1 z+\cdots+a_d z^d \in \C[z]$ of degree $d$ with the sequence $(a_0,a_1,\ldots,a_d) \in \C^{d+1}$, we treat the two concepts interchangeably, and therefore apply terminology of sequences to polynomials and vice versa.
Thus the {\it length} of a nonzero polynomial $a(z)$, denoted $\len a$, is $1+\deg a$, and the zero polynomial has length $0$.
We use the adjectives {\it binary} and {\it Littlewood} interchangeably to indicate sequences of terms in $\{1,-1\}$, or equivalently, polynomials whose coefficients are in $\{1,-1\}$.
Shapiro's sequences are just the sequences of coefficients of Shapiro's polynomials.
Around the same time that Shapiro discovered his sequences, Golay independently produced sequences following the same recursion in his research on complementary pairs \cite{Golay-51}.
These sequences of Golay and Shapiro were independently rediscovered by Rudin \cite{Rudin}, and the associated polynomials came to be called the Rudin-Shapiro polynomials.
Their $L^4$ norm on the complex unit circle was studied by Littlewood \cite[Problem 19]{Littlewood} in connection with his investigations of flatness of polynomials.
It was realized \cite[eq.~(4.1)]{Hoholdt-Jensen} that calculating the $L^4$ norm of a polynomial on the complex unit circle is equivalent to studying the mean square magnitude of the autocorrelation of the associated sequence, a problem investigated by Golay \cite{Golay-72,Golay-75}.
Once it was recognized that the Rudin-Shapiro sequences have good autocorrelation properties, they were generalized, first by H\o holdt, Jensen, and Justesen \cite{Hoholdt-Jensen-Justesen} to allow for an arbitrary sign sequence $\sigma_0,\sigma_1,\ldots$ with seed $f_0=1$, and then by Borwein and Mossinghoff \cite{Borwein-Mossinghoff} to allow the seed to be an arbitrary Littlewood polynomial, and finally by Katz, Lee, and Trunov \cite{Katz-Lee-Trunov} to allow the seed to be any polynomial in $\C[z]$ with nonzero constant coefficient.

Sequences with low mean square autocorrelation are useful in various applications in remote sensing and communications \cite{Turyn-60,Golay-72,Golomb-Gong,Schroeder}.
If $a=(a_0,a_1,\ldots,a_{\ell-1}) \in \C^\ell$ is a sequence and $s \in \Z$, then the {\it aperiodic autocorrelation of $a$ at shift $s$} is
\[
C_{a,a}(s)=\sum_{j \in \Z} a_{j+s} \conj{a_j},
\]
where we use the convention that $a_j=0$ for all $j\not\in\{0,1,\ldots,\ell-1\}$.
One can think of comparing $a$ with a copy of itself that has been shifted $s$ places, and one makes the comparison by taking the inner product of the overlapping portions.
Note that $C_{a,a}(0)$ is just $\sum_{j \in \Z} |a_j|^2$, which is the squared Euclidean norm of the vector $a \in \C^\ell$.
In particular, if the terms of $a$ are complex numbers of unit magnitude, then $C_{a,a}(0)=\len a$.
One wants sequences $a$ where $|C_{a,a}(s)|$ is small for every nonzero $s$, while $C_{a,a}(0)$ is large: this aids in applications involving synchronization, since it implies a sharp contrast between the sequence in alignment with itself and out of alignment with itself.
To this end, we study the mean square magnitude of these values, and define the {\it autocorrelation demerit factor} of a sequence $a$ to be
\[
\ADF(a) = \frac{\sum_{s \in \Z, s\not=0} |C_{a,a}(s)|^2}{|C_{a,a}(0)|^2},
\]
which is the sum of squares of the autocorrelation values at nonzero shifts for the sequence obtained by scaling $a$ so that it has a Euclidean magnitude of $1$.
Sequences with good performance are those with small $\ADF$, since we want all the correlations at nonzero shifts to be small.
The {\it autocorrelation merit factor} of sequence $a$ is $1/\ADF(a)$, and was introduced \cite{Golay-72} and named \cite{Golay-75} by Golay.
The merit factor is more intuitive because it is large for sequences with good performance, but the demerit factor is easier to study, since it places the complicated terms in the numerator.

We now make the connection between Golay's merit factor and Littlewood's work on norms of polynomials on the complex unit circle.
We identify the sequence $(a_0,a_1,\ldots,a_{\ell-1}) \in \C^\ell$ with the polynomial $a(z)=a_0+a_1 z+ \cdots + a_{\ell-1} z^{\ell-1}$, and because we are interested in the polynomial's values on the complex unit circle, we set the convention that $\conj{a(z)}$ is the Laurent polynomial $\conj{a_0} +\conj {a_1} z^{-1} + \cdots +\conj{a_d} z^{-d}$.
 We also introduce the convention that $|a(z)|^2$ is the Laurent polynomial $a(z) \conj{a(z)}$, and then it is not hard to show that
\begin{equation}\label{Bernard}
|a(z)|^2 = \sum_{s \in \Z} C_{a,a}(s) z^s.
\end{equation}
If $a(z)$ is in the ring $\C[z,z^{-1}]$ of Laurent polynomials with complex coefficients, and if $p \geq 1$ is a real number, then the $L^p$ norm of $a$ on the complex unit circle is
\[
\norm{a}{p} = \frac{1}{2\pi} \left(\int_0^{2\pi} |a(e^{i\theta})|^p d\theta \right)^{1/p}.
\]
Then one finds \cite[\S V]{Katz} that
\[
\ADF(a) = \frac{\normff{a}}{\normtf{a}}-1.
\]
Littlewood's calculation \cite[Problem 19]{Littlewood} of the $L^4$ norm of the original family $f_0,f_1,\ldots$ of Rudin-Shapiro polynomials shows that $\lim_{n\to\infty} \ADF(f_n)=1/3$, or equivalently, that the merit factor of the polynomials approaches $3$ as their length tends to infinity.
H\o holdt, Jensen, and Justesen \cite[Theorem 2.3]{Hoholdt-Jensen-Justesen} generalized this result to any family of Rudin-Shapiro-like polynomials with seed $f_0=1$ and arbitrary sign sequence.
Borwein and Mossinghoff \cite[Theorem 1 and Corollary 1]{Borwein-Mossinghoff} made a further generalization to allow $f_0$ to be any nonzero Littlewood polynomial, in which case the asymptotic demerit factor is a function of some norms depending only upon the seed $f_0$.
The following is the further generalization by Katz, Lee, and Trunov \cite[Theorem 1.2]{Katz-Lee-Trunov} that allows $f_0$ to be any polynomial in $\C[z]$ with nonzero constant coefficient.
\begin{theorem}[Katz-Lee-Trunov, 2017]\label{Arthur}
Let $f_0 \in \C[z]$ be a polynomial with a nonzero constant coefficient and $\sigma_0,\sigma_1,\ldots$ be a sequence of elements from $\{1,-1\}$.  If $f_0,f_1,\ldots$ is the sequence of Rudin-Shapiro-like polynomials generated via recursion \eqref{Abraham}, then
\[
\lim_{n\to\infty} \ADF(f_n) = -1 + \frac{2}{3} \cdot \frac{\normff{f_0} + \normtt{f_0 \widetilde{f_0}}}{\normtf{f_0}} \geq \frac{1}{3},
\]
where $\widetilde{f_0}(z)$ denotes the polynomial $f_0(-z)$.
\end{theorem}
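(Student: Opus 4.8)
The plan is to work entirely with the correlation formulas $\normff{f_n}=\sum_{s\in\Z}|C_{f_n,f_n}(s)|^2$ and $\normtf{f_n}=|C_{f_n,f_n}(0)|^2$, so that $\ADF(f_n)=\normff{f_n}/\normtf{f_n}-1$ reduces the problem to the asymptotics of two sums of squared correlations. The denominator is disposed of first: because the two summands in \eqref{Abraham} occupy disjoint ranges of degrees and the operations $\dag$ and $z\mapsto-z$ preserve the modulus of every coefficient, we get $\normtt{f_{n+1}}=2\normtt{f_n}$, hence $\normtf{f_n}=4^n\normtf{f_0}$. It then remains only to evaluate $\lim_{n\to\infty}\normff{f_n}/4^n$.

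For the numerator I would carry along the pair $R_n=\normff{f_n}=\sum_s|C_{f_n,f_n}(s)|^2$ and $S_n=\normtt{f_n\widetilde{f_n}}$, recording at the outset the identity $S_n=\sum_s(-1)^s|C_{f_n,f_n}(s)|^2$ (integrate $|f_n(e^{i\theta})|^2|f_n(-e^{i\theta})|^2$). Since $f_{n+1}$ is the concatenation of the two length-$L_n$ blocks $f_n$ and $\sigma_n\widetilde{f_n^\dag}$, where $L_n=\len f_n=2^n\len f_0$ and $\widetilde{g}(z)=g(-z)$, a direct block computation gives for every shift $s$
\[
C_{f_{n+1},f_{n+1}}(s)=[\,s\text{ even}\,]\,2\,C_{f_n,f_n}(s)+\sigma_n\,C_{\widetilde{f_n^\dag},f_n}(s-L_n),\qquad C_{a,b}(s)=\sum_j a_{j+s}\conj{b_j}.
\]
The first term is the sum of the within-block autocorrelations, which collapses to twice the even-shift autocorrelations because $C_{\widetilde{f_n^\dag},\widetilde{f_n^\dag}}(s)=(-1)^sC_{f_n,f_n}(s)$; the second term records the overlap between the two blocks.

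The step I expect to be the crux is proving that these two contributions never overlap, so that the recursion closes on $(R_n,S_n)$ with no error term. The mechanism is a parity identity: a short calculation gives $C_{\widetilde{f_n^\dag},f_n}(t)=(-1)^t\conj{(f_n\widetilde{f_n})_{L_n-1-t}}$, so this cross-correlation is, up to a unimodular factor and a reindexing, the coefficient sequence of $f_n\widetilde{f_n}$. Because $(f_n\widetilde{f_n})(z)=f_n(z)f_n(-z)$ is invariant under $z\mapsto-z$ it has only even-degree coefficients, whence $C_{\widetilde{f_n^\dag},f_n}(s-L_n)=0$ for every even $s$. Thus at each shift exactly one of the two terms survives, and the same identity yields $\sum_s|C_{\widetilde{f_n^\dag},f_n}(s-L_n)|^2=\normtt{f_n\widetilde{f_n}}=S_n$. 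Summing $|C_{f_{n+1},f_{n+1}}(s)|^2$ and $(-1)^s|C_{f_{n+1},f_{n+1}}(s)|^2$ over $s$ then produces the exact linear system
\[
R_{n+1}=2R_n+4S_n,\qquad S_{n+1}=2R_n.
\]

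Finally I would solve this recursion. Its matrix $\left(\begin{smallmatrix}2&4\\2&0\end{smallmatrix}\right)$ has eigenvalues $4$ and $-2$, so $R_n$ grows like $4^n$; projecting $(R_0,S_0)$ onto the $\lambda=4$ eigenvector $(2,1)$ gives $\lim_{n\to\infty}R_n/4^n=\tfrac23(R_0+S_0)=\tfrac23\bigl(\normff{f_0}+\normtt{f_0\widetilde{f_0}}\bigr)$. Dividing by $\normtf{f_n}/4^n=\normtf{f_0}$ and subtracting $1$ yields the asymptotic demerit factor $-1+\tfrac23\cdot\frac{\normff{f_0}+\normtt{f_0\widetilde{f_0}}}{\normtf{f_0}}$. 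For the inequality I would use $\normff{f_0}+\normtt{f_0\widetilde{f_0}}=\sum_s(1+(-1)^s)|C_{f_0,f_0}(s)|^2=2\sum_{s\text{ even}}|C_{f_0,f_0}(s)|^2\ge 2|C_{f_0,f_0}(0)|^2=2\normtf{f_0}$, which forces the limit to be at least $\tfrac13$, with equality exactly when $C_{f_0,f_0}(s)=0$ for all even $s\neq0$ — the condition singling out length-one seeds and interleavings of complementary pairs.
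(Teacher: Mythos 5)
Your proposal targets a statement that this paper does not itself prove: \cref{Arthur} is imported from Katz--Lee--Trunov (their Theorem 1.2), and the machinery in Section 2 of this paper (the interleaving lemma, \cref{Cecilia}, \cref{Veronica}) only derives consequences \emph{from} that cited limit formula. So your argument stands as a self-contained proof, and it is essentially correct; moreover it follows the same skeleton as the proofs in the cited literature (Borwein--Mossinghoff and Katz--Lee--Trunov): set up a closed two-term linear recursion in $R_n=\normff{f_n}$ and $S_n=\normtt{f_n\widetilde{f_n}}$, observe that the matrix $\left(\begin{smallmatrix}2&4\\2&0\end{smallmatrix}\right)$ has eigenvalues $4$ and $-2$, and divide by $\normtf{f_n}=4^n\normtf{f_0}$ so that only the projection onto the dominant eigenvector survives. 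Your correlation-sum identities ($S_n=\sum_s(-1)^s|C_{f_n,f_n}(s)|^2$, the parity identity for $C_{\widetilde{f_n^\dag},f_n}$, and the final inequality via $R_0+S_0=2\sum_{s\ \mathrm{even}}|C_{f_0,f_0}(s)|^2\ge 2\normtf{f_0}$) all check out.

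One important discrepancy is not your error but the paper's: you obtain the limit $-1+\tfrac23\bigl(\normff{f_0}+\normtt{f_0\widetilde{f_0}}\bigr)/\normtf{f_0}$, a \emph{sum}, whereas the statement as typeset shows the \emph{product} $\normff{f_0}\,\normtt{f_0\widetilde{f_0}}$. The statement contains a typo (a missing $+$): the product version is absurd already for $f_0=1$, where it gives $-1/3<0$, while the sum version gives $1/3$; the sum version is also the only one consistent with \cref{Dennis}, which the paper obtains by combining \cref{Arthur} with the identity $\normff{f}+\normtt{f\widetilde{f}}=2\normtt{\,|g|^2+|h|^2}$. So your formula is the correct reading of the theorem.

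There is one bookkeeping slip you should repair. Your displayed formula for $C_{f_{n+1},f_{n+1}}(s)$ is valid only for $s\ge 0$: for $s<0$ the overlap of the two blocks is recorded by the mirror term $\sigma_n C_{f_n,\widetilde{f_n^\dag}}(s+L_n)$, which your formula omits (your cross term $C_{\widetilde{f_n^\dag},f_n}(s-L_n)$ vanishes identically for $s<0$, yet negative odd shifts certainly have nonzero autocorrelation). If one literally sums your formula over all $s\in\Z$, the cross contribution is counted once rather than twice, yielding $R_{n+1}=2R_n+3S_n$ instead of the system $R_{n+1}=2R_n+4S_n$, $S_{n+1}=2R_n$ that you assert. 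The fix is immediate: either include the mirror term, whose squared sum is again $S_n$ by the conjugate symmetry $C_{a,b}(-t)=\conj{C_{b,a}(t)}$, or work only with $s\ge 0$ and use $C_{F,F}(-s)=\conj{C_{F,F}(s)}$. With that repair, the system you state is exactly right (e.g., for $f_0=1+z$, $\sigma_0=1$ one checks $(R_0,S_0)=(6,2)$ and $(R_1,S_1)=(20,12)$), and the rest of your argument — eigenvalue analysis, the limit $\tfrac23(R_0+S_0)$, and the bound $\ge 1/3$ with equality precisely when all even nonzero autocorrelations of $f_0$ vanish — goes through and even recovers the equality characterization of \cref{Veronica}.
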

We say that a seed $f_0$ is {\it optimal} if the limiting autocorrelation demerit factor of its stem $f_0,f_1,\ldots$ is precisely $1/3$, as this is the lowest possible value.
Borwein and Mossinghoff \cite{Borwein-Mossinghoff} used a computer search informed by theory to determine all Littlewood polynomials $f_0$ with lengths from $1$ to $40$ that are optimal seeds.
They found that optimal Littlewood seeds exist at lengths $1$, $2$, $4$, $8$, $16$, $20$, $32$, and $40$, but no other lengths less than $40$.
More recently, Katz, Lee, and Trunov \cite{Katz-Lee-Trunov} conducted a massive distributed computer search via the Open Science Grid \cite{Pordes, Sfiligoi} to find all optimal Littlewood seeds up to length $52$, and discovered that there are also optimal Littlewood seeds of length $52$, but none with lengths from $41$ to $51$.
They also determined the lowest asymptotic demerit factor achieved by Littlewood seeds of length $\ell$ for each $\ell \in \{1,2,\ldots,52\}$; their results \cite[Table 1]{Katz-Lee-Trunov} are summarized here in \cref{Peter} and plotted in \cref{Quentin}.
\renewcommand{\arraystretch}{1.3}%
\begin{table}[!ht]
\caption{Lowest limiting autocorrelation demerit factor for seeds of lengths $1$ to $52$}\label{Peter}
\begin{center}
\begin{tabular}{|cc|cc|cc|cc|}
\hline
seed   & limiting           & seed   & limiting           & seed   & limiting            & seed   & limiting            \\
length & $\ADF$             & length & $\ADF$             & length & $\ADF$              & length & $\ADF$              \\ \hline\hline
 $1$   & $\frac{1}{3}$      & $14$   & $\frac{73}{147}$   & $27$   & $\frac{833}{2187}$  & $40$   & $\frac{1}{3}$       \\ 
 $2$   & $\frac{1}{3}$      & $15$   & $\frac{281}{675}$  & $28$   & $\frac{53}{147}$    & $41$   & $\frac{1841}{5043}$ \\  
 $3$   & $\frac{17}{27}$    & $16$   & $\frac{1}{3}$      & $29$   & $\frac{953}{2523}$  & $42$   & $\frac{521}{1323}$  \\  
 $4$   & $\frac{1}{3}$      & $17$   & $\frac{353}{867}$  & $30$   & $\frac{281}{675}$   & $43$   & $\frac{2017}{5547}$ \\  
 $5$   & $\frac{41}{75}$    & $18$   & $\frac{113}{243}$  & $31$   & $\frac{1081}{2883}$ & $44$   & $\frac{125}{363}$   \\  
 $6$   & $\frac{17}{27}$    & $19$   & $\frac{433}{1083}$ & $32$   & $\frac{1}{3}$       & $45$   & $\frac{2201}{6075}$ \\  
 $7$   & $\frac{73}{147}$   & $20$   & $\frac{1}{3}$      & $33$   & $\frac{1217}{3267}$ & $46$   & $\frac{617}{1587}$  \\  
 $8$   & $\frac{1}{3}$      & $21$   & $\frac{521}{1323}$ & $34$   & $\frac{353}{867}$   & $47$   & $\frac{2393}{6627}$ \\  
 $9$   & $\frac{113}{243}$  & $22$   & $\frac{161}{363}$  & $35$   & $\frac{1361}{3675}$ & $48$   & $\frac{73}{216}$    \\  
$10$   & $\frac{41}{75}$    & $23$   & $\frac{617}{1587}$ & $36$   & $\frac{29}{81}$     & $49$   & $\frac{2593}{7203}$ \\  
$11$   & $\frac{161}{363}$  & $24$   & $\frac{19}{54}$    & $37$   & $\frac{1513}{4107}$ & $50$   & $\frac{721}{1875}$  \\  
$12$   & $\frac{11}{27}$    & $25$   & $\frac{721}{1875}$ & $38$   & $\frac{433}{1083}$  & $51$   & $\frac{2801}{7803}$ \\  
$13$   & $\frac{217}{507}$  & $26$   & $\frac{217}{507}$  & $39$   & $\frac{1673}{4563}$ & $52$   & $\frac{1}{3}$       \\ \hline
\end{tabular}
\end{center}
\end{table}%
\renewcommand{\arraystretch}{1.0}%
\begin{figure}[!ht]
\begin{center}
\caption{Lowest limiting autocorrelation demerit factor for seeds of lengths $1$ to $52$}\label{Quentin}
\begin{tikzpicture}[gnuplot]
\path (0.000,0.000) rectangle (12.446,9.398);
\gpcolor{color=gp lt color border}
\gpsetlinetype{gp lt border}
\gpsetdashtype{gp dt solid}
\gpsetlinewidth{1.00}
\draw[gp path] (1.504,0.985)--(1.684,0.985);
\draw[gp path] (11.893,0.985)--(11.713,0.985);
\node[gp node right] at (1.320,0.985) {$0.3$};
\draw[gp path] (1.504,2.134)--(1.684,2.134);
\draw[gp path] (11.893,2.134)--(11.713,2.134);
\node[gp node right] at (1.320,2.134) {$0.35$};
\draw[gp path] (1.504,3.283)--(1.684,3.283);
\draw[gp path] (11.893,3.283)--(11.713,3.283);
\node[gp node right] at (1.320,3.283) {$0.4$};
\draw[gp path] (1.504,4.432)--(1.684,4.432);
\draw[gp path] (11.893,4.432)--(11.713,4.432);
\node[gp node right] at (1.320,4.432) {$0.45$};
\draw[gp path] (1.504,5.582)--(1.684,5.582);
\draw[gp path] (11.893,5.582)--(11.713,5.582);
\node[gp node right] at (1.320,5.582) {$0.5$};
\draw[gp path] (1.504,6.731)--(1.684,6.731);
\draw[gp path] (11.893,6.731)--(11.713,6.731);
\node[gp node right] at (1.320,6.731) {$0.55$};
\draw[gp path] (1.504,7.880)--(1.684,7.880);
\draw[gp path] (11.893,7.880)--(11.713,7.880);
\node[gp node right] at (1.320,7.880) {$0.6$};
\draw[gp path] (1.504,9.029)--(1.684,9.029);
\draw[gp path] (11.893,9.029)--(11.713,9.029);
\node[gp node right] at (1.320,9.029) {$0.65$};
\draw[gp path] (1.504,0.985)--(1.504,1.165);
\draw[gp path] (1.504,9.029)--(1.504,8.849);
\node[gp node center] at (1.504,0.677) {$0$};
\draw[gp path] (3.428,0.985)--(3.428,1.165);
\draw[gp path] (3.428,9.029)--(3.428,8.849);
\node[gp node center] at (3.428,0.677) {$10$};
\draw[gp path] (5.352,0.985)--(5.352,1.165);
\draw[gp path] (5.352,9.029)--(5.352,8.849);
\node[gp node center] at (5.352,0.677) {$20$};
\draw[gp path] (7.276,0.985)--(7.276,1.165);
\draw[gp path] (7.276,9.029)--(7.276,8.849);
\node[gp node center] at (7.276,0.677) {$30$};
\draw[gp path] (9.200,0.985)--(9.200,1.165);
\draw[gp path] (9.200,9.029)--(9.200,8.849);
\node[gp node center] at (9.200,0.677) {$40$};
\draw[gp path] (11.123,0.985)--(11.123,1.165);
\draw[gp path] (11.123,9.029)--(11.123,8.849);
\node[gp node center] at (11.123,0.677) {$50$};
\draw[gp path] (1.504,9.029)--(1.504,0.985)--(11.893,0.985)--(11.893,9.029)--cycle;
\node[gp node center,rotate=-270] at (0.246,5.007) {Minimum Asymptotic ADF};
\node[gp node center] at (6.698,0.215) {Length of Seed $f_0$};
\gpcolor{rgb color={0.000,0.000,0.000}}
\gpsetpointsize{4.00}
\gppoint{gp mark 7}{(1.696,1.750)}
\gppoint{gp mark 7}{(1.889,1.750)}
\gppoint{gp mark 7}{(2.274,1.750)}
\gppoint{gp mark 7}{(3.043,1.750)}
\gppoint{gp mark 7}{(4.582,1.750)}
\gppoint{gp mark 7}{(5.352,1.750)}
\gppoint{gp mark 7}{(7.660,1.750)}
\gppoint{gp mark 7}{(9.200,1.750)}
\gppoint{gp mark 7}{(11.508,1.750)}
\gppoint{gp mark 7}{(3.813,3.453)}
\gppoint{gp mark 7}{(6.121,2.176)}
\gppoint{gp mark 7}{(6.891,2.375)}
\gppoint{gp mark 7}{(8.430,2.318)}
\gppoint{gp mark 7}{(9.969,2.003)}
\gppoint{gp mark 7}{(10.739,1.856)}
\gppoint{gp mark 7}{(2.658,8.560)}
\gppoint{gp mark 7}{(3.428,6.653)}
\gppoint{gp mark 7}{(4.197,5.501)}
\gppoint{gp mark 7}{(4.967,4.777)}
\gppoint{gp mark 7}{(5.737,4.283)}
\gppoint{gp mark 7}{(6.506,3.927)}
\gppoint{gp mark 7}{(7.276,3.656)}
\gppoint{gp mark 7}{(8.045,3.446)}
\gppoint{gp mark 7}{(8.815,3.279)}
\gppoint{gp mark 7}{(9.584,3.141)}
\gppoint{gp mark 7}{(10.354,3.024)}
\gppoint{gp mark 7}{(11.123,2.927)}
\gppoint{gp mark 7}{(2.081,8.560)}
\gppoint{gp mark 7}{(2.466,6.653)}
\gppoint{gp mark 7}{(2.851,5.501)}
\gppoint{gp mark 7}{(3.236,4.777)}
\gppoint{gp mark 7}{(3.620,4.283)}
\gppoint{gp mark 7}{(4.005,3.927)}
\gppoint{gp mark 7}{(4.390,3.656)}
\gppoint{gp mark 7}{(4.775,3.446)}
\gppoint{gp mark 7}{(5.159,3.279)}
\gppoint{gp mark 7}{(5.544,3.141)}
\gppoint{gp mark 7}{(5.929,3.024)}
\gppoint{gp mark 7}{(6.314,2.927)}
\gppoint{gp mark 7}{(6.699,2.842)}
\gppoint{gp mark 7}{(7.083,2.771)}
\gppoint{gp mark 7}{(7.468,2.706)}
\gppoint{gp mark 7}{(7.853,2.651)}
\gppoint{gp mark 7}{(8.238,2.601)}
\gppoint{gp mark 7}{(8.622,2.555)}
\gppoint{gp mark 7}{(9.007,2.516)}
\gppoint{gp mark 7}{(9.392,2.479)}
\gppoint{gp mark 7}{(9.777,2.447)}
\gppoint{gp mark 7}{(10.162,2.417)}
\gppoint{gp mark 7}{(10.546,2.387)}
\gppoint{gp mark 7}{(10.931,2.362)}
\gppoint{gp mark 7}{(11.316,2.339)}
\gpsetlinetype{gp lt axes}
\gpsetdashtype{gp dt axes}
\gpsetlinewidth{2.00}
\draw[gp path] (1.504,1.751)--(1.609,1.751)--(1.714,1.751)--(1.819,1.751)--(1.924,1.751)%
  --(2.029,1.751)--(2.134,1.751)--(2.239,1.751)--(2.344,1.751)--(2.448,1.751)--(2.553,1.751)%
  --(2.658,1.751)--(2.763,1.751)--(2.868,1.751)--(2.973,1.751)--(3.078,1.751)--(3.183,1.751)%
  --(3.288,1.751)--(3.393,1.751)--(3.498,1.751)--(3.603,1.751)--(3.708,1.751)--(3.813,1.751)%
  --(3.918,1.751)--(4.023,1.751)--(4.127,1.751)--(4.232,1.751)--(4.337,1.751)--(4.442,1.751)%
  --(4.547,1.751)--(4.652,1.751)--(4.757,1.751)--(4.862,1.751)--(4.967,1.751)--(5.072,1.751)%
  --(5.177,1.751)--(5.282,1.751)--(5.387,1.751)--(5.492,1.751)--(5.597,1.751)--(5.702,1.751)%
  --(5.807,1.751)--(5.911,1.751)--(6.016,1.751)--(6.121,1.751)--(6.226,1.751)--(6.331,1.751)%
  --(6.436,1.751)--(6.541,1.751)--(6.646,1.751)--(6.751,1.751)--(6.856,1.751)--(6.961,1.751)%
  --(7.066,1.751)--(7.171,1.751)--(7.276,1.751)--(7.381,1.751)--(7.486,1.751)--(7.590,1.751)%
  --(7.695,1.751)--(7.800,1.751)--(7.905,1.751)--(8.010,1.751)--(8.115,1.751)--(8.220,1.751)%
  --(8.325,1.751)--(8.430,1.751)--(8.535,1.751)--(8.640,1.751)--(8.745,1.751)--(8.850,1.751)%
  --(8.955,1.751)--(9.060,1.751)--(9.165,1.751)--(9.270,1.751)--(9.374,1.751)--(9.479,1.751)%
  --(9.584,1.751)--(9.689,1.751)--(9.794,1.751)--(9.899,1.751)--(10.004,1.751)--(10.109,1.751)%
  --(10.214,1.751)--(10.319,1.751)--(10.424,1.751)--(10.529,1.751)--(10.634,1.751)--(10.739,1.751)%
  --(10.844,1.751)--(10.949,1.751)--(11.053,1.751)--(11.158,1.751)--(11.263,1.751)--(11.368,1.751)%
  --(11.473,1.751)--(11.578,1.751)--(11.683,1.751)--(11.788,1.751)--(11.893,1.751);
\gpcolor{color=gp lt color border}
\gpsetlinetype{gp lt border}
\gpsetdashtype{gp dt solid}
\gpsetlinewidth{1.00}
\draw[gp path] (1.504,9.029)--(1.504,0.985)--(11.893,0.985)--(11.893,9.029)--cycle;
\gpdefrectangularnode{gp plot 1}{\pgfpoint{1.504cm}{0.985cm}}{\pgfpoint{11.893cm}{9.029cm}}
\end{tikzpicture}
\end{center}
\end{figure}
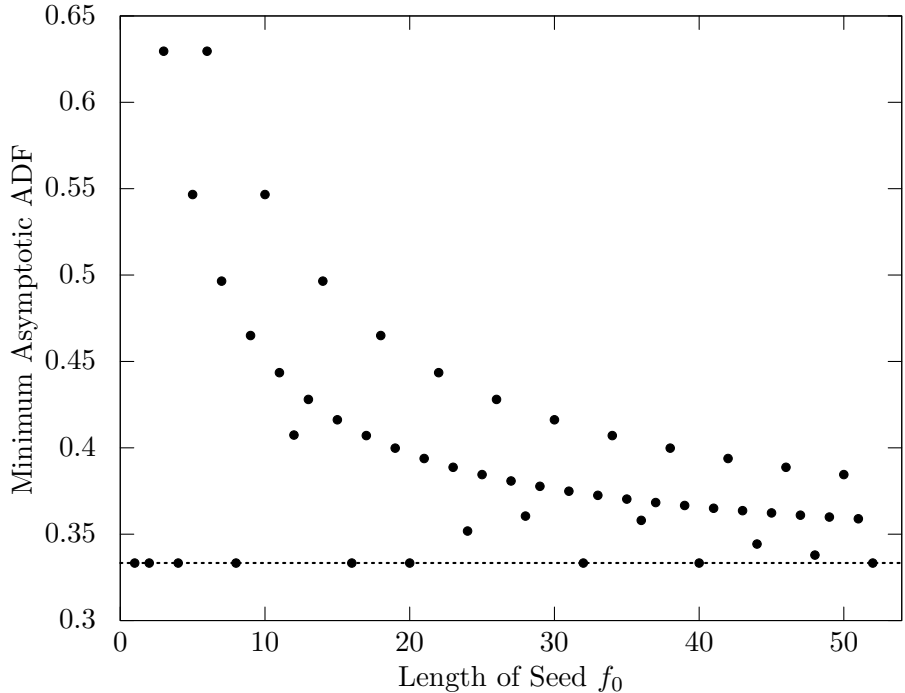%
The dotted line in \cref{Quentin} is drawn at asymptotic autocorrelation demerit factor $1/3$ to help indicate lengths for which optimal seeds exist.
We noticed a relationship between the lengths for which optimal Littlewood seeds exist and the sizes of objects known as Golay complementary pairs, which we now describe.

A {\it Golay complementary pair} (or just a {\it Golay pair} or {\it complementary pair}) is a pair of Laurent polynomials $g(z), h(z) \in \C[z,z^{-1}]$ such that $|g(z)|^2+|h(z)|^2$ is a constant.
These were first devised by Golay in \cite{Golay-51}.
If we use \eqref{Bernard} to interpret the Golay condition in terms of autocorrelation, the pair $(g,h)$ is Golay complementary if and only if $C_{g,g}(s)+C_{h,h}(s)=0$ for every nonzero shift $s$.
We note that $(g,h)$ is always a Golay complementary pair if both $g$ and $h$ are constants.
A {\it trivial} Golay complementary pair $(g,h)$ is one in which at least one of $g$ or $h$ is zero; otherwise $(g,h)$ is {\it nontrivial}.
A Golay pair $(g,h)$ is said to be {\it binary} if both $g$ and $h$ are Littlewood polynomials (that is, correspond to binary sequences).
If $(g,h)$ is a nontrivial binary Golay pair, then $g$ and $h$ must have the same length, for otherwise, the polynomial of greater length $m$ would have a nonzero correlation value at shift $m-1$, while the shorter one would have a zero correlation value at that shift, and so the sum of these correlations could not be zero.
We therefore define the {\it length} of a nontrivial binary Golay pair $(g,h)$ to be the common value $\len g=\len h$; when we speak of binary Golay pair with a length, we are asserting that it is nontrivial.
The following result due to Turyn \cite[Corollary to Lemma 5]{Turyn-74} gives all $m$ for which there are known to exist binary Golay pairs of length $m$.
\begin{theorem}[Turyn, 1974]\label{David}
For any nonnegative integers $a$, $b$, and $c$, there is a binary Golay complementary pair of length $2^a 10^b 26^c$.
\end{theorem}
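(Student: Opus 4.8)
The plan is to reduce \cref{David} to two facts: a short list of \emph{base} pairs, and a \emph{multiplication lemma} asserting that the set of lengths of nontrivial binary Golay pairs is closed under multiplication. Granting these, I would argue by induction on $a+b+c$. The base of the induction is the pair $(1,1)$, which is a binary Golay pair of length $1=2^0 10^0 26^0$ since $|1|^2+|1|^2=2$. For the inductive step, given a binary Golay pair whose length is $2^a 10^b 26^c$, multiplying it (via the multiplication lemma) by a base pair of length $2$, $10$, or $26$ produces one of length $2^{a+1}10^b 26^c$, $2^a 10^{b+1}26^c$, or $2^a 10^b 26^{c+1}$. Hence binary Golay pairs of every length $2^a 10^b 26^c$ exist.

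For the multiplication lemma, let $(a,b)$ be a binary Golay pair of length $m$ and $(c,d)$ a binary Golay pair of length $n$; recall that $(g,h)$ is Golay exactly when the Laurent polynomial $|g|^2+|h|^2=g\conj g+h\conj h$ is a constant, which I call $\alpha$ for $(a,b)$ and $\beta$ for $(c,d)$. Put $P=\tfrac12(c+d)$ and $Q=\tfrac12(c-d)$; because $c,d$ are Littlewood, at each index exactly one of the corresponding coefficients of $P$ and $Q$ is nonzero, and that coefficient lies in $\{1,-1\}$. I would then define $E,F$ of length $mn$ by
\[
\begin{pmatrix} E(z)\\ F(z)\end{pmatrix} = \begin{pmatrix} a(z) & b^\dag(z)\\ b(z) & -a^\dag(z)\end{pmatrix}\begin{pmatrix} P(z^m)\\ Q(z^m)\end{pmatrix}.
\]
Reading $E$ as $n$ blocks of $m$ consecutive coefficients, its $j$th block is $p_j a+q_j b^\dag$ (with $p_j,q_j$ the coefficients of $P,Q$), which is $\pm a$ or $\pm b^\dag$ according to whether $c_j=d_j$ or $c_j=-d_j$; likewise the $j$th block of $F$ is $\pm b$ or $\mp a^\dag$. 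Since $a,b,a^\dag,b^\dag$ are Littlewood, $E$ and $F$ are Littlewood, and because the leading and constant coefficients survive, both have length exactly $mn$.

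To see that $(E,F)$ is Golay, let $M_{a,b}(z)$ denote the above $2\times 2$ matrix and let $\widehat{M}_{a,b}$ be obtained from it by replacing each entry $p(z)$ with $\conj{p(z)}$ and transposing. Using the identities $\conj{a^\dag(z)}=z^{-(m-1)}a(z)$ and $\conj{b^\dag(z)}=z^{-(m-1)}b(z)$, a direct check gives $\widehat{M}_{a,b}(z)\,M_{a,b}(z)=\alpha I$, which merely repackages the Golay property of $(a,b)$: the diagonal entries are $|a|^2+|b|^2=\alpha$, while the off-diagonal entries vanish by commutativity of polynomial multiplication. Writing $v(z)=\bigl(P(z^m),\,Q(z^m)\bigr)^{\!\top}$ and reversing the order of the $\widehat{\phantom{M}}$ operation across products as usual, it follows that
\[
|E|^2+|F|^2 = \widehat{(M_{a,b}v)}\,(M_{a,b}v) = \widehat{v}\,\widehat{M}_{a,b}\,M_{a,b}\,v = \alpha\,\widehat{v}\,v = \alpha\bigl(|P(z^m)|^2+|Q(z^m)|^2\bigr).
\]
Finally, a direct computation gives $|P(z^m)|^2+|Q(z^m)|^2=\tfrac12\bigl(|c(z^m)|^2+|d(z^m)|^2\bigr)$, which is constant because $(c,d)$ is Golay; hence $|E|^2+|F|^2=\tfrac12\alpha\beta$ is constant, so $(E,F)$ is a binary Golay pair of length $mn$.

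It then remains to supply the base pairs: $(1+z,\,1-z)$ in length $2$, and the classical binary Golay pairs in lengths $10$ and $26$, each verified directly from $C_{g,g}(s)+C_{h,h}(s)=0$ for $s\neq 0$ (a finite computation). I expect the crux to be the design of the multiplication step, where the two requirements pull in opposite directions: keeping the output Littlewood forces the block-disjoint decomposition $P=\tfrac12(c+d)$, $Q=\tfrac12(c-d)$, and only with this choice does the Golay property survive, via the para-unitary identity $\widehat{M}_{a,b}M_{a,b}=\alpha I$ and the cross-term cancellation $a\conj{b^\dag}=b\conj{a^\dag}$. Exhibiting the length-$10$ and length-$26$ seeds is the remaining finite obstacle, and is exactly the content of Turyn's original construction.
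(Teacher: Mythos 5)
The paper does not actually prove \cref{David}; it is quoted from Turyn's paper \cite[Corollary to Lemma 5]{Turyn-74}, so your proposal is necessarily a different route --- in effect a self-contained reconstruction of the classical composition argument --- and its core is correct. I checked the key steps. The para-unitary identity $\widehat{M}_{a,b}\,M_{a,b}=\alpha I$ holds as an identity of Laurent-polynomial matrices: since $a$ and $b$ are real-coefficient polynomials of the same degree $m-1$ (legitimate, because nontrivial binary Golay pairs have equal lengths, as the Introduction notes), one has $\conj{a^\dag}(z)=z^{-(m-1)}a(z)$ and $\conj{b^\dag}(z)=z^{-(m-1)}b(z)$, so both off-diagonal entries vanish by commutativity, while the diagonal entries equal $|a|^2+|b|^2=\alpha$. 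The decomposition $P=\tfrac12(c+d)$, $Q=\tfrac12(c-d)$ does force exactly one of $p_j,q_j$ to be $\pm 1$ and the other $0$ at each index, and since the supports of $z^{jm}a(z)$ (respectively $z^{jm}b^\dag(z)$) are disjoint for distinct $j$, the outputs $E,F$ are genuinely Littlewood of length $mn$. Finally $|P(z^m)|^2+|Q(z^m)|^2=\tfrac12\beta$ is constant, giving $|E|^2+|F|^2=\tfrac12\alpha\beta$, so $(E,F)$ is a binary Golay pair of length $mn$, and the induction on $a+b+c$ goes through. What your approach buys over the paper's bare citation is an explicit multiplication mechanism; what it still owes is the non-inductive, empirical kernel of the theorem, namely actual binary Golay pairs of lengths $10$ and $26$ (you exhibit length $2$). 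No composition argument can produce these; they were found by search and are tabulated in Golay's paper \cite{Golay-61}, which this paper already cites, so a complete write-up should either display them explicitly or cite that source rather than calling them ``the remaining finite obstacle.'' (A small historical correction: those base pairs are due to Golay; Turyn's contribution is precisely the product closure that you reconstructed.)
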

A computer search by Borwein and Ferguson \cite{Borwein-Ferguson} discovered that binary Golay pairs do not exist at any length less than $100$ that is not already accounted for in this theorem.

Recall that the computer searches of Borwein-Mossinghoff and Katz-Lee-Trunov showed that optimal Littlewood seeds for the Rudin-Shapiro-like recursion exist at lengths $1$, $2$, $4$, $8$, $16$, $20$, $32$, $40$, and $52$, but no at other lengths less than $52$.
It is interesting to note that an optimal Littlewood seed of length $\ell$ with $1 < \ell \leq 52$ exists if and only if a nontrivial binary Golay pair of length $\ell/2$ exists.
There is indeed a relation between optimal seeds and Golay pairs, and to explain it we must introduce the concept of interleaving.

If $g(z), h(z) \in \C[z]$ are a pair of polynomials, then the {\it interleaving of $g$ with $h$} is $g(z^2)+z h(z^2)$.  If $g$ and $h$ both represent sequences of length $m$, then their interleaving represents the sequence $(g_0,h_0,g_1,h_1,\ldots,g_{m-1},h_{m-1})$ of length $2 m$.
Similarly, if $g$ represents a sequence of length $m+1$ and $h$ represents a sequence of length $m$, then their interleaving represents the sequence $(g_0,h_0,g_1,h_1,\ldots,g_{m-1},h_{m-1},g_m)$ of length $2 m+1$.
Now we can state our main result, which we prove later as \cref{Veronica}.
\begin{theorem}\label{Edward}
Let $f_0 \in \C[z]$ be a polynomial with a nonzero constant coefficient and $\sigma_0,\sigma_1,\ldots$ be a sequence of elements from $\{1,-1\}$.
If $f_0,f_1,\ldots$ is the sequence of Rudin-Shapiro-like polynomials generated via recursion \eqref{Abraham}, then $\lim_{n\to\infty} \ADF(f_n) \geq 1/3$, with equality if and only if $f_0$ the interleaving of a Golay complementary pair.
\end{theorem}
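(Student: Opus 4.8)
The plan is to feed the interleaving structure of $f_0$ into \cref{Arthur} and then read off the equality case from a single autocorrelation inequality. By \cref{Arthur} the limiting demerit factor is an increasing affine function of the ratio $(\normff{f_0}+\normtt{f_0\widetilde{f_0}})/\normtf{f_0}$ that equals $1/3$ exactly when this ratio equals $2$, so the whole theorem reduces to proving
\[
\normff{f_0}+\normtt{f_0\widetilde{f_0}}\geq 2\,\normtf{f_0},
\]
with equality precisely when $f_0$ is the interleaving of a Golay complementary pair. I would first record that every admissible seed has a \emph{unique} interleaving decomposition $f_0(z)=g(z^2)+z\,h(z^2)$, obtained by separating the even- and odd-indexed coefficients; here $g(0)=f_0(0)\neq 0$ and $f_0$ is the interleaving of the pair $(g,h)$, so that ``$f_0$ is the interleaving of a Golay pair'' means exactly that this distinguished pair $(g,h)$ is Golay.

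The decisive step is a norm identity. Because $\widetilde{f_0}(z)=f_0(-z)$, on the unit circle $z=e^{i\theta}$ we have $|f_0\widetilde{f_0}(z)|^2=|f_0(z)|^2\,|f_0(-z)|^2$, and hence
\[
\normff{f_0}+\normtt{f_0\widetilde{f_0}}=\frac{1}{2\pi}\int_0^{2\pi}|f_0(z)|^2\bigl(|f_0(z)|^2+|f_0(-z)|^2\bigr)\,d\theta.
\]
Now \eqref{Bernard} gives $|f_0(z)|^2=\sum_{s}C_{f_0,f_0}(s)z^s$, whence $|f_0(z)|^2+|f_0(-z)|^2=2\sum_{s\ \text{even}}C_{f_0,f_0}(s)z^s$, since the odd-shift terms cancel. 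Multiplying by $|f_0(z)|^2$, integrating, and applying Parseval (pairing $z^r$ with $z^{-r}$ via $C_{f_0,f_0}(-s)=\conj{C_{f_0,f_0}(s)}$) collapses the integral to
\[
\normff{f_0}+\normtt{f_0\widetilde{f_0}}=2\sum_{s\ \text{even}}|C_{f_0,f_0}(s)|^2.
\]

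With this identity in hand the inequality is immediate: the $s=0$ summand equals $|C_{f_0,f_0}(0)|^2=\normt{f_0}^4=\normtf{f_0}$ and every remaining summand is nonnegative, so $\normff{f_0}+\normtt{f_0\widetilde{f_0}}\geq 2\,\normtf{f_0}$, with equality if and only if $C_{f_0,f_0}(s)=0$ for all nonzero even $s$. To finish I would translate this vanishing condition back to $(g,h)$: a direct count of the interleaving gives $C_{f_0,f_0}(2t)=C_{g,g}(t)+C_{h,h}(t)$, so the even-shift correlations of $f_0$ vanish for every $t\neq 0$ exactly when $C_{g,g}(t)+C_{h,h}(t)=0$ for every $t\neq 0$, which is the autocorrelation form of the Golay condition recorded earlier. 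Hence the limit equals $1/3$ if and only if $(g,h)$ is a Golay complementary pair; the length-$1$ seeds $f_0=c$ appear here as the interleavings of the trivial pairs $(c,0)$.

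I expect the norm identity to be the only genuine obstacle, and within it the two points needing care are the cancellation that leaves $|f_0(z)|^2+|f_0(-z)|^2$ equal to twice the even-shift part of $|f_0(z)|^2$, and the correct bookkeeping of $C_{f_0,f_0}(-s)=\conj{C_{f_0,f_0}(s)}$ in the Parseval step. After that the argument is routine: the inequality and its equality case are formal, and the remaining work is the interleaving correlation formula $C_{f_0,f_0}(2t)=C_{g,g}(t)+C_{h,h}(t)$, which incidentally also explains the absence of optimal seeds of odd length exceeding $1$, since for such a seed $\deg f_0$ is even and nonzero while $C_{f_0,f_0}(\deg f_0)$ is the product of the leading coefficient of $f_0$ with the conjugate of its (nonzero) constant coefficient, hence nonzero.
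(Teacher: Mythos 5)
Your proposal is correct and is essentially the paper's own argument in different bookkeeping: your identity $\normff{f_0}+\normtt{f_0\widetilde{f_0}}=2\sum_{s\,\text{even}}|C_{f_0,f_0}(s)|^2$ coincides term by term with the paper's interleaving identity $\normff{f_0}+\normtt{f_0\widetilde{f_0}}=2\normtt{\,|g|^2+|h|^2}$, since your formula $C_{f_0,f_0}(2t)=C_{g,g}(t)+C_{h,h}(t)$ says exactly that the even-shift autocorrelations of $f_0$ are the coefficients of $|g(z)|^2+|h(z)|^2$, and your parity cancellation between $|f_0(z)|^2$ and $|f_0(-z)|^2$ is the paper's parallelogram-law step applied to $f_0(\pm z)=g(z^2)\pm z\,h(z^2)$. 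Your closing step (discard the nonzero even shifts, with equality iff they all vanish) is precisely \cref{Cecilia} of the paper (the sum of squared coefficients of $|g|^2+|h|^2$ dominates its squared constant coefficient, with equality iff $(g,h)$ is Golay), so the only difference is that you work in the autocorrelation domain of $f_0$ throughout and pass to the pair $(g,h)$ only at the very end, whereas the paper decomposes into $(g,h)$ at the outset.
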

According to this theorem, one gets an optimal seed of length $1$ when one interleaves the trivial Golay pair $(1,0)$.
Along with \cref{David}, this tells us something about the possible lengths of optimal seeds for binary Rudin-Shapiro-like sequences.
\begin{corollary}
There exists a Littlewood polynomial $f_0$ of length $\ell$ giving rise to a sequence $(f_0,f_1,\ldots)$ of Rudin-Shapiro-like polynomials with asymptotic autocorrelation demerit factor $1/3$ if $\ell=1$ or $\ell=2^a 10^b 26^c$ for some integers $a$, $b$, $c$ with $a \geq 1$ and $b,c \geq 0$.
\end{corollary}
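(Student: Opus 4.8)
The plan is to deduce the corollary directly from the main result \cref{Edward} together with Turyn's existence theorem \cref{David}, so that essentially all the work reduces to matching up lengths and verifying that the relevant interleavings are Littlewood. By \cref{Edward}, to produce an optimal Littlewood seed of length $\ell$ it suffices to exhibit a Golay complementary pair $(g,h)$ whose interleaving $g(z^2)+z h(z^2)$ is a Littlewood polynomial of length $\ell$ with nonzero constant coefficient. I would therefore split into the two cases named in the statement.

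First I would dispose of the case $\ell=1$. Here I take the trivial Golay pair $(1,0)$, whose interleaving is $1(z^2)+z\cdot 0 = 1$, a Littlewood polynomial of length $1$. Since $(1,0)$ is a Golay complementary pair (indeed $|1|^2+|0|^2=1$ is constant), \cref{Edward} gives that the stem seeded by $f_0=1$ has asymptotic demerit factor exactly $1/3$.

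For the remaining case $\ell = 2^a 10^b 26^c$ with $a\geq 1$, I would set $m = \ell/2 = 2^{a-1} 10^b 26^c$; the hypothesis $a\geq 1$ is precisely what guarantees that $m$ is again an integer of the form covered by \cref{David}, with nonnegative exponents $a-1,b,c$. Applying \cref{David} yields a nontrivial binary Golay pair $(g,h)$ of common length $m$. I would then form the interleaving $f_0(z)=g(z^2)+z h(z^2)$ and check the two required properties: its coefficient sequence is $(g_0,h_0,\ldots,g_{m-1},h_{m-1})$, all entries of which lie in $\{1,-1\}$ because $g$ and $h$ are Littlewood, so $f_0$ is a Littlewood polynomial of length $2m=\ell$; and its constant coefficient is $g_0\in\{1,-1\}$, hence nonzero, so $f_0$ is an admissible seed. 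Now \cref{Edward} applies to this interleaving of a Golay pair and yields asymptotic demerit factor $1/3$, completing the case.

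I do not anticipate a genuine obstacle here, since the analytic content lives entirely in \cref{Edward} and \cref{David}. The only points that need care are bookkeeping: confirming that the interleaving of two length-$m$ sequences has length $2m$, that the $a\geq 1$ hypothesis is exactly the condition making $\ell/2$ fall under Turyn's theorem, and that passing from a binary Golay pair to its interleaving preserves the Littlewood property while leaving the constant coefficient nonzero. If anything is subtle it is only the $\ell=1$ boundary case, which requires noticing that the trivial pair $(1,0)$ is admissible and produces the length-$1$ seed.
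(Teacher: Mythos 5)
Your proposal is correct and follows exactly the paper's (implicit) argument: the corollary is obtained by combining \cref{Edward} with Turyn's \cref{David}, interleaving a binary Golay pair of length $\ell/2$ to get an optimal Littlewood seed of length $\ell$, and handling $\ell=1$ via the trivial pair $(1,0)$. The bookkeeping you supply (Littlewood property and nonzero constant coefficient of the interleaving, and the role of $a\geq 1$) is precisely what the paper leaves to the reader.
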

This explains why we find optimal Littlewood seeds at the lengths $1$, $2$, $4$, $8$, $16$, $20$, $32$, $40$, and $52$ in \cref{Peter} and \cref{Quentin}.
Borwein and Ferguson's result \cite{Borwein-Ferguson} that binary Golay pairs do not exist at any length less than $100$ that is not accounted for in \cref{David} explains why we do not see optimal Littlewood seeds at any other lengths in \cref{Peter} and \cref{Quentin}.

Although we have now settled the question of optimality, there still appears to be a lot of structure in the data, visible in \cref{Quentin}, that begs to be explained.
The points representing non-optimal seeds in \cref{Quentin} appear to lie in three families.
\begin{enumerate}[(i)]
\item\label{Eugene} The lengths that are $2$ modulo $4$ (and greater than $2$) produce a series of points that seem to be decreasing monotonically toward an asymptotic demerit factor of $1/3$ as their length increases.
\item The lengths that are odd (and greater than $1$) produce another series of points also decreasing monotonically toward an asymptotic demerit factor of $1/3$ as their length increases, and members of this series tend to be closer to $1/3$ than those of comparable length in series \eqref{Eugene}.
\item The lengths that are divisible by $4$ (and not twice the length of a binary Golay pair) tend to produce exceptionally low asymptotic demerit factors, but do not decrease monotonically.
\end{enumerate}
It turns out that most of this data can be explained by the fact that seeds of these lengths cannot be interleavings of Golay complementary pairs, but those that are closest to being optimal are interleavings of pairs $(g,h)$ of Littlewood polynomials that are very close to being complementary pairs in the sense that although $|g(z)|^2+|h(z)|^2$ is nonconstant, its $L^2$ norm is as small as certain necessary conditions on its structure allow.  It should be noted that these conditions are not sufficient to guarantee existence.
To this end, we define these almost-complementary pairs, whose structure depends on the parities of the degrees of the polynomials in the pair.
\begin{definition}\label{Yolanda}
Let $(g,h)$ be a pair of nonzero Littlewood polynomials that is not a Golay pair, and let $f(z)=\sum_{s \in \Z} f_s z^s=|g(z)|^2+|h(z)|^2$.
\begin{enumerate}[(i)]
\item\label{Eleanor} If both $g$ and $h$ are of odd length $m$, and $|f_s| \leq 2$ for every nonzero $s$, then $(g,h)$ is an {\it almost-complementary pair of odd length $m$}.
\item If one of $g$ or $h$ has length $m$ and the other has length $m+1$, and if $|f_s| \leq 1$ for every nonzero $s$, then $(g,h)$ is an {\it almost-complementary pair of uneven lengths $m$ and $m+1$}.
\item If both $g$ and $h$ are of even length $m$, and $f(z)$ has at most two nonconstant monomials, and the coefficients of these monomials have magnitude less than or equal to $2$, then $(g,h)$ is an {\it almost-complementary pair of even length $m$}.
\end{enumerate}
One may use \eqref{Bernard} to interpret the conditions in \cref{Yolanda} in terms of autocorrelation; for example, the condition in part \eqref{Eleanor} is equivalent to saying that $|C_{g,g}(s)+C_{h,h}(s)| \leq 2$ for every nonzero $s$.
Almost-complementary pairs of odd length that meet further constraints are constructed in \cite{Liu-Parampalli-Guan} and \cite{Adhikary-Majhi-Liu-Guan}.
\end{definition}
From the conditions in these definitions flow more precise consequences about the number of nonzero coefficients and their precise magnitudes, summarized Lemmas \ref{Nathan}--\ref{Sally}.
These enable us to obtain the following lower bounds on the asymptotic autocorrelation demerit factors for families of Rudin-Shapiro-like polynomials arising from binary seeds.
\begin{theorem}\label{Thomas}
Let $f_0$ be a Littlewood polynomial of length $\ell>0$, and let $g$ and $h$ be the Littlewood polynomials of lengths $\ceil{\ell/2}$ and $\floor{\ell/2}$ such that $f_0$ is the interleaving of $g$ with $h$.
Let $(f_0,f_1,\ldots)$ be a sequence of Rudin-Shapiro-like polynomials generated from seed $f_0$ with any sign sequence via recursion \eqref{Abraham}, and let $L=\lim_{n\to \infty} \ADF(f_n)$.
\begin{enumerate}[(i)]
\item\label{Abigail} If $\ell=1$, then $(g,h)$ is a trivial Golay complementary pair and $L=1/3$.
\item\label{Boris} If $\ell$ is even and there is a Golay complementary pair of length $\ell/2$, then $L \geq 1/3$, with equality if and only if $(g,h)$ is a Golay complementary pair.
\item\label{Claudia} If $\ell \equiv 0 \pmod{4}$ and there is no Golay complementary pair of length $\ell/2$, then $L \geq \frac{1}{3}+\frac{32}{3\ell^2}$, with equality if and only if $(g,h)$ is an almost-complementary pair.
\item\label{Deidre} If $\ell$ is odd and $\ell>1$, then $L \geq \frac{1}{3} + \frac{4(\ell-1)}{3\ell^2}$, with equality if and only if $(g,h)$ is an almost-complementary pair.
\item\label{Ellen} If $\ell \equiv 2 \pmod{4}$ and $\ell>2$, then there is no Golay complementary pair of length $\ell/2$.  Then $L \geq \frac{1}{3} + \frac{8(\ell-2)}{3\ell^2}$, with equality if and only if $(g,h)$ is an almost-complementary pair.
\end{enumerate}
\end{theorem}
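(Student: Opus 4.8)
My plan is to collapse \cref{Arthur} into one transparent formula for $L$ in terms of the single polynomial $f(z)=|g(z)|^2+|h(z)|^2=\sum_s f_s z^s$, and then to read off each case from the parities of the $f_s$ together with the near-complementary structure of \cref{Yolanda}. Since $f_0$ is Littlewood of length $\ell$, every coefficient has unit modulus, so $\normtt{f_0}=\ell$ and $\normtf{f_0}=\ell^2$. Writing $|f_0(z)|^2=\sum_s C_{f_0,f_0}(s)z^s$ as in \eqref{Bernard}, I evaluate $\normff{f_0}=\sum_s|C_{f_0,f_0}(s)|^2$ and, because $\widetilde{f_0}(z)=f_0(-z)$ merely replaces $z$ by $-z$ in $|f_0(z)|^2$, also $\normtt{f_0\widetilde{f_0}}=\sum_s(-1)^s|C_{f_0,f_0}(s)|^2$; hence in the sum $\normff{f_0}+\normtt{f_0\widetilde{f_0}}$ appearing in \cref{Arthur} the odd-shift terms cancel and the even-shift terms double. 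The bridge to $(g,h)$ is the interleaving identity $f_0(z)=g(z^2)+zh(z^2)$, which yields $|f_0(z)|^2=f(z^2)+(\text{odd-degree terms})$ and therefore $C_{f_0,f_0}(2s)=f_s$ for all $s$. Combining these facts, I expect
\[
L=-1+\frac{2}{3}\cdot\frac{2\sum_s|f_s|^2}{\ell^2}=\frac13+\frac{4}{3\ell^2}\sum_{s\neq0}|f_s|^2,
\]
using that the constant term of $f$ is $C_{f_0,f_0}(0)=\len g+\len h=\ell$. This already gives $L\ge1/3$ and settles parts \eqref{Abigail} and \eqref{Boris}: equality holds exactly when $f_s=0$ for every $s\neq0$, i.e.\ when $f$ is constant, i.e.\ when $(g,h)$ is a Golay pair (recovering \cref{Edward}); and for $\ell=1$ the pair $(g,0)$ is a trivial Golay pair.

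Next I would analyze the parities of the $f_s=C_{g,g}(s)+C_{h,h}(s)$. For $|s|<\len g$ the value $C_{g,g}(s)$ is a sum of $\len g-|s|$ terms from $\{1,-1\}$, so $C_{g,g}(s)\equiv\len g-s\pmod2$, and likewise for $h$. When $\ell\equiv0$ or $2\pmod4$, both $g$ and $h$ have length $\ell/2$, so $f_s\equiv2(\ell/2-s)\equiv0\pmod2$ and every $f_s$ is even; when $\ell$ is odd, $\len g=\len h+1$, so $f_s$ is odd—hence nonzero—for $1\le|s|\le\floor{\ell/2}$. I use the symmetry $f_{-s}=f_s$ throughout.

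Parity now disposes of parts \eqref{Claudia} and \eqref{Deidre}. For $\ell\equiv0\pmod4$ the hypothesis forbids a Golay pair, so some $f_s\neq0$ with $s>0$; by symmetry $f_{-s}\neq0$ as well, and both are even, hence of modulus at least $2$, giving $\sum_{s\neq0}|f_s|^2\ge8$ and $L\ge\frac13+\frac{32}{3\ell^2}$. Equality forces a single symmetric pair of modulus exactly $2$, which is precisely item (iii) of \cref{Yolanda}. For $\ell$ odd each $f_s$ with $1\le s\le\floor{\ell/2}$ is a nonzero odd integer, so $|f_s|\ge1$ and $\sum_{s\neq0}|f_s|^2\ge2\floor{\ell/2}=\ell-1$, giving $L\ge\frac13+\frac{4(\ell-1)}{3\ell^2}$; equality forces all of them to equal $\pm1$, which is item (ii) of \cref{Yolanda}. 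In each case equality is characterized exactly by the relevant near-complementary definition, the precise magnitudes being supplied by Lemmas \ref{Nathan}--\ref{Sally}.

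The remaining part \eqref{Ellen} is where the real work lies. Here $m=\ell/2$ is odd and every $f_s$ is even, but parity alone yields only $\sum_{s\neq0}|f_s|^2\ge8$, far short of the required $2(\ell-2)=4(m-1)$. The missing ingredient is a sharp lower bound on the size of the support of $f$: at least $(m-1)/2$ of the coefficients $f_1,\dots,f_{m-1}$ must be nonzero, equivalently $\sum_{s=1}^{m-1}|f_s|^2\ge2(m-1)$. This cannot come from parity and symmetry alone—it is the genuinely combinatorial heart of the theorem—and it is exactly what the structural results Lemmas \ref{Nathan}--\ref{Sally} on odd-length near-complementary pairs provide, by constraining which coefficients may vanish and forcing the surviving ones to modulus $2$. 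Granting this, $\sum_{s\neq0}|f_s|^2\ge4(m-1)$ and $L\ge\frac13+\frac{8(\ell-2)}{3\ell^2}$, with equality if and only if $(g,h)$ satisfies item (i) of \cref{Yolanda}; moreover the bound is strictly positive for $m>1$, which simultaneously establishes the opening assertion of \eqref{Ellen} that no binary Golay pair of odd length greater than $1$ exists. I expect this odd-length counting bound to be the main obstacle, precisely because it is the one step not reducible to the elementary parity argument that powers the other cases.
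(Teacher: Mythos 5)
Your master formula $L=\tfrac13+\tfrac{4}{3\ell^2}\sum_{s\neq0}|f_s|^2$ is correct and is exactly the paper's \cref{Dennis} specialized to Littlewood seeds (the paper reaches it through an interleaving-norm identity rather than your even/odd-shift cancellation, but the computation is equivalent, and your reading of the numerator in \cref{Arthur} as the sum $\normff{f_0}+\normtt{f_0\widetilde{f_0}}$ is the internally consistent one). Parts \eqref{Abigail}, \eqref{Boris}, \eqref{Claudia}, and \eqref{Deidre} are handled soundly: the mod-$2$ parity of autocorrelations together with the symmetry $f_{-s}=f_s$ does yield the bounds $\tfrac{32}{3\ell^2}$ and $\tfrac{4(\ell-1)}{3\ell^2}$, and your equality characterizations agree with \cref{Yolanda}. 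This matches the paper's route through Propositions \ref{Matilda} and \ref{Katherine}.

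The genuine gap is in part \eqref{Ellen}, and you located it but did not fill it: the bound $\sum_{s\neq0}|f_s|^2\geq 2(\ell-2)=4(m-1)$ for odd $m=\ell/2$ is \emph{granted}, not proved. Appealing to Lemmas \ref{Nathan}--\ref{Sally} cannot work, because those are characterizations of pairs already assumed near-complementary (i.e., already satisfying $|f_s|\le 2$ for $s\neq0$); they say nothing about an arbitrary Littlewood pair $(g,h)$ of odd length and so cannot produce a universal lower bound. The missing idea is a congruence modulo $4$, not modulo $2$: since $uv\equiv u+v-1\pmod 4$ for $u,v\in\{1,-1\}$, the autocorrelations of a Littlewood polynomial $g$ of length $m$ satisfy $C_{g,g}(s)+C_{g,g}(s-m)\equiv m\pmod 4$ for $0\le s\le m$ (the paper's \cref{Fiona} and \cref{Henriette}). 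Summing this for $g$ and $h$ gives $f_s+f_{s-m}\equiv 2m\equiv 2\pmod 4$ for $0<s<m$ because $m$ is odd, so in each of the $m-1$ disjoint pairs $\{f_s,f_{s-m}\}$ at least one entry is $\equiv 2\pmod 4$, hence of magnitude at least $2$; summing $|f_s|^2+|f_{s-m}|^2\ge 4$ over $s=1,\dots,m-1$ yields precisely $4(m-1)$ (the paper's \cref{Imogene} and \cref{Lawrence}). This same congruence is what proves the forward direction of \cref{Nathan}, hence what makes your equality case equivalent to item (i) of \cref{Yolanda}, and it is also what establishes the opening claim of part \eqref{Ellen} that no Golay pair of odd length greater than $1$ exists. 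Without this mod-$4$ ingredient, part \eqref{Ellen} --- and with it the theorem --- remains unproven.
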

We use \cref{Thomas} to reinterpret the data in \cref{Peter} by plotting the data again in a new \cref{Raphael}.
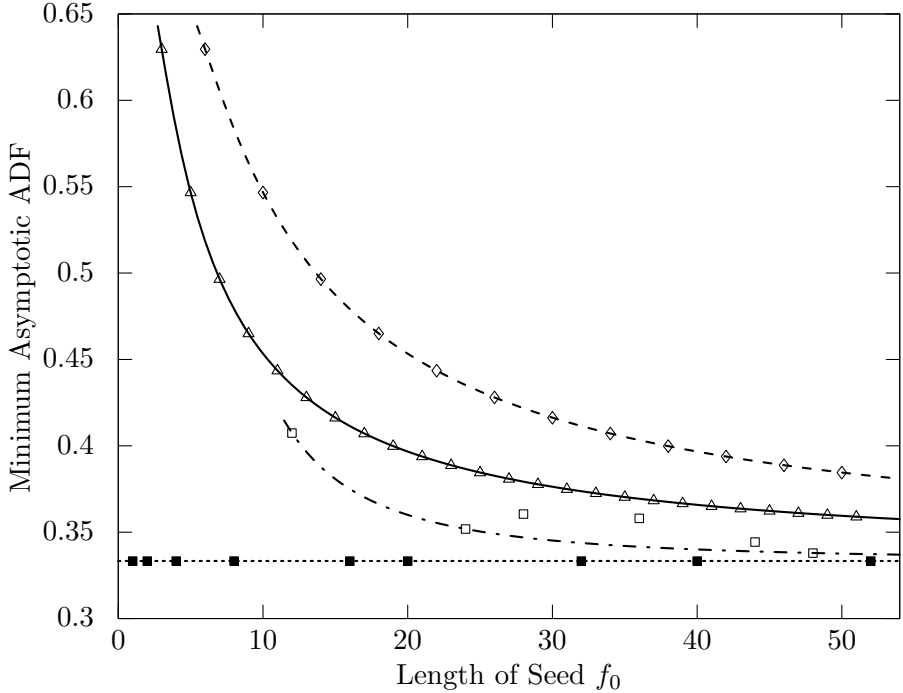
\begin{figure}[!ht]
\begin{center}
\caption{Lowest limiting autocorrelation demerit factor for seeds of lengths $1$ to $52$ with bounds from \cref{Thomas}}\label{Raphael}
\begin{tikzpicture}[gnuplot]
\path (0.000,0.000) rectangle (12.446,9.398);
\gpcolor{color=gp lt color border}
\gpsetlinetype{gp lt border}
\gpsetdashtype{gp dt solid}
\gpsetlinewidth{1.00}
\draw[gp path] (1.504,0.985)--(1.684,0.985);
\draw[gp path] (11.893,0.985)--(11.713,0.985);
\node[gp node right] at (1.320,0.985) {$0.3$};
\draw[gp path] (1.504,2.134)--(1.684,2.134);
\draw[gp path] (11.893,2.134)--(11.713,2.134);
\node[gp node right] at (1.320,2.134) {$0.35$};
\draw[gp path] (1.504,3.283)--(1.684,3.283);
\draw[gp path] (11.893,3.283)--(11.713,3.283);
\node[gp node right] at (1.320,3.283) {$0.4$};
\draw[gp path] (1.504,4.432)--(1.684,4.432);
\draw[gp path] (11.893,4.432)--(11.713,4.432);
\node[gp node right] at (1.320,4.432) {$0.45$};
\draw[gp path] (1.504,5.582)--(1.684,5.582);
\draw[gp path] (11.893,5.582)--(11.713,5.582);
\node[gp node right] at (1.320,5.582) {$0.5$};
\draw[gp path] (1.504,6.731)--(1.684,6.731);
\draw[gp path] (11.893,6.731)--(11.713,6.731);
\node[gp node right] at (1.320,6.731) {$0.55$};
\draw[gp path] (1.504,7.880)--(1.684,7.880);
\draw[gp path] (11.893,7.880)--(11.713,7.880);
\node[gp node right] at (1.320,7.880) {$0.6$};
\draw[gp path] (1.504,9.029)--(1.684,9.029);
\draw[gp path] (11.893,9.029)--(11.713,9.029);
\node[gp node right] at (1.320,9.029) {$0.65$};
\draw[gp path] (1.504,0.985)--(1.504,1.165);
\draw[gp path] (1.504,9.029)--(1.504,8.849);
\node[gp node center] at (1.504,0.677) {$0$};
\draw[gp path] (3.428,0.985)--(3.428,1.165);
\draw[gp path] (3.428,9.029)--(3.428,8.849);
\node[gp node center] at (3.428,0.677) {$10$};
\draw[gp path] (5.352,0.985)--(5.352,1.165);
\draw[gp path] (5.352,9.029)--(5.352,8.849);
\node[gp node center] at (5.352,0.677) {$20$};
\draw[gp path] (7.276,0.985)--(7.276,1.165);
\draw[gp path] (7.276,9.029)--(7.276,8.849);
\node[gp node center] at (7.276,0.677) {$30$};
\draw[gp path] (9.200,0.985)--(9.200,1.165);
\draw[gp path] (9.200,9.029)--(9.200,8.849);
\node[gp node center] at (9.200,0.677) {$40$};
\draw[gp path] (11.123,0.985)--(11.123,1.165);
\draw[gp path] (11.123,9.029)--(11.123,8.849);
\node[gp node center] at (11.123,0.677) {$50$};
\draw[gp path] (1.504,9.029)--(1.504,0.985)--(11.893,0.985)--(11.893,9.029)--cycle;
\node[gp node center,rotate=-270] at (0.246,5.007) {Minimum Asymptotic ADF};
\node[gp node center] at (6.698,0.215) {Length of Seed $f_0$};
\gpcolor{rgb color={0.000,0.000,0.000}}
\gpsetpointsize{4.00}
\gppoint{gp mark 5}{(1.696,1.750)}
\gppoint{gp mark 5}{(1.889,1.750)}
\gppoint{gp mark 5}{(2.274,1.750)}
\gppoint{gp mark 5}{(3.043,1.750)}
\gppoint{gp mark 5}{(4.582,1.750)}
\gppoint{gp mark 5}{(5.352,1.750)}
\gppoint{gp mark 5}{(7.660,1.750)}
\gppoint{gp mark 5}{(9.200,1.750)}
\gppoint{gp mark 5}{(11.508,1.750)}
\gppoint{gp mark 4}{(3.813,3.453)}
\gppoint{gp mark 4}{(6.121,2.176)}
\gppoint{gp mark 4}{(6.891,2.375)}
\gppoint{gp mark 4}{(8.430,2.318)}
\gppoint{gp mark 4}{(9.969,2.003)}
\gppoint{gp mark 4}{(10.739,1.856)}
\gpsetpointsize{6.00}
\gppoint{gp mark 12}{(2.658,8.560)}
\gppoint{gp mark 12}{(3.428,6.653)}
\gppoint{gp mark 12}{(4.197,5.501)}
\gppoint{gp mark 12}{(4.967,4.777)}
\gppoint{gp mark 12}{(5.737,4.283)}
\gppoint{gp mark 12}{(6.506,3.927)}
\gppoint{gp mark 12}{(7.276,3.656)}
\gppoint{gp mark 12}{(8.045,3.446)}
\gppoint{gp mark 12}{(8.815,3.279)}
\gppoint{gp mark 12}{(9.584,3.141)}
\gppoint{gp mark 12}{(10.354,3.024)}
\gppoint{gp mark 12}{(11.123,2.927)}
\gppoint{gp mark 8}{(2.081,8.560)}
\gppoint{gp mark 8}{(2.466,6.653)}
\gppoint{gp mark 8}{(2.851,5.501)}
\gppoint{gp mark 8}{(3.236,4.777)}
\gppoint{gp mark 8}{(3.620,4.283)}
\gppoint{gp mark 8}{(4.005,3.927)}
\gppoint{gp mark 8}{(4.390,3.656)}
\gppoint{gp mark 8}{(4.775,3.446)}
\gppoint{gp mark 8}{(5.159,3.279)}
\gppoint{gp mark 8}{(5.544,3.141)}
\gppoint{gp mark 8}{(5.929,3.024)}
\gppoint{gp mark 8}{(6.314,2.927)}
\gppoint{gp mark 8}{(6.699,2.842)}
\gppoint{gp mark 8}{(7.083,2.771)}
\gppoint{gp mark 8}{(7.468,2.706)}
\gppoint{gp mark 8}{(7.853,2.651)}
\gppoint{gp mark 8}{(8.238,2.601)}
\gppoint{gp mark 8}{(8.622,2.555)}
\gppoint{gp mark 8}{(9.007,2.516)}
\gppoint{gp mark 8}{(9.392,2.479)}
\gppoint{gp mark 8}{(9.777,2.447)}
\gppoint{gp mark 8}{(10.162,2.417)}
\gppoint{gp mark 8}{(10.546,2.387)}
\gppoint{gp mark 8}{(10.931,2.362)}
\gppoint{gp mark 8}{(11.316,2.339)}
\gpsetlinetype{gp lt axes}
\gpsetdashtype{gp dt axes}
\gpsetlinewidth{2.00}
\draw[gp path] (1.504,1.751)--(1.609,1.751)--(1.714,1.751)--(1.819,1.751)--(1.924,1.751)%
  --(2.029,1.751)--(2.134,1.751)--(2.239,1.751)--(2.344,1.751)--(2.448,1.751)--(2.553,1.751)%
  --(2.658,1.751)--(2.763,1.751)--(2.868,1.751)--(2.973,1.751)--(3.078,1.751)--(3.183,1.751)%
  --(3.288,1.751)--(3.393,1.751)--(3.498,1.751)--(3.603,1.751)--(3.708,1.751)--(3.813,1.751)%
  --(3.918,1.751)--(4.023,1.751)--(4.127,1.751)--(4.232,1.751)--(4.337,1.751)--(4.442,1.751)%
  --(4.547,1.751)--(4.652,1.751)--(4.757,1.751)--(4.862,1.751)--(4.967,1.751)--(5.072,1.751)%
  --(5.177,1.751)--(5.282,1.751)--(5.387,1.751)--(5.492,1.751)--(5.597,1.751)--(5.702,1.751)%
  --(5.807,1.751)--(5.911,1.751)--(6.016,1.751)--(6.121,1.751)--(6.226,1.751)--(6.331,1.751)%
  --(6.436,1.751)--(6.541,1.751)--(6.646,1.751)--(6.751,1.751)--(6.856,1.751)--(6.961,1.751)%
  --(7.066,1.751)--(7.171,1.751)--(7.276,1.751)--(7.381,1.751)--(7.486,1.751)--(7.590,1.751)%
  --(7.695,1.751)--(7.800,1.751)--(7.905,1.751)--(8.010,1.751)--(8.115,1.751)--(8.220,1.751)%
  --(8.325,1.751)--(8.430,1.751)--(8.535,1.751)--(8.640,1.751)--(8.745,1.751)--(8.850,1.751)%
  --(8.955,1.751)--(9.060,1.751)--(9.165,1.751)--(9.270,1.751)--(9.374,1.751)--(9.479,1.751)%
  --(9.584,1.751)--(9.689,1.751)--(9.794,1.751)--(9.899,1.751)--(10.004,1.751)--(10.109,1.751)%
  --(10.214,1.751)--(10.319,1.751)--(10.424,1.751)--(10.529,1.751)--(10.634,1.751)--(10.739,1.751)%
  --(10.844,1.751)--(10.949,1.751)--(11.053,1.751)--(11.158,1.751)--(11.263,1.751)--(11.368,1.751)%
  --(11.473,1.751)--(11.578,1.751)--(11.683,1.751)--(11.788,1.751)--(11.893,1.751);
\gpsetlinetype{gp lt border}
\gpsetdashtype{gp dt 6}
\draw[gp path] (3.708,3.620)--(3.813,3.454)--(3.918,3.309)--(4.023,3.182)--(4.127,3.069)%
  --(4.232,2.970)--(4.337,2.881)--(4.442,2.802)--(4.547,2.731)--(4.652,2.667)--(4.757,2.609)%
  --(4.862,2.556)--(4.967,2.508)--(5.072,2.464)--(5.177,2.424)--(5.282,2.387)--(5.387,2.353)%
  --(5.492,2.322)--(5.597,2.293)--(5.702,2.266)--(5.807,2.241)--(5.911,2.218)--(6.016,2.197)%
  --(6.121,2.177)--(6.226,2.158)--(6.331,2.140)--(6.436,2.124)--(6.541,2.109)--(6.646,2.094)%
  --(6.751,2.081)--(6.856,2.068)--(6.961,2.056)--(7.066,2.044)--(7.171,2.034)--(7.276,2.023)%
  --(7.381,2.014)--(7.486,2.005)--(7.590,1.996)--(7.695,1.988)--(7.800,1.980)--(7.905,1.973)%
  --(8.010,1.965)--(8.115,1.959)--(8.220,1.952)--(8.325,1.946)--(8.430,1.940)--(8.535,1.935)%
  --(8.640,1.929)--(8.745,1.924)--(8.850,1.919)--(8.955,1.915)--(9.060,1.910)--(9.165,1.906)%
  --(9.270,1.902)--(9.374,1.898)--(9.479,1.894)--(9.584,1.890)--(9.689,1.887)--(9.794,1.883)%
  --(9.899,1.880)--(10.004,1.877)--(10.109,1.874)--(10.214,1.871)--(10.319,1.868)--(10.424,1.865)%
  --(10.529,1.863)--(10.634,1.860)--(10.739,1.857)--(10.844,1.855)--(10.949,1.853)--(11.053,1.851)%
  --(11.158,1.848)--(11.263,1.846)--(11.368,1.844)--(11.473,1.842)--(11.578,1.841)--(11.683,1.839)%
  --(11.788,1.837)--(11.893,1.835);
\gpsetdashtype{gp dt 3}
\draw[gp path] (2.553,8.867)--(2.658,8.561)--(2.763,8.253)--(2.868,7.956)--(2.973,7.675)%
  --(3.078,7.411)--(3.183,7.164)--(3.288,6.935)--(3.393,6.722)--(3.498,6.524)--(3.603,6.339)%
  --(3.708,6.167)--(3.813,6.007)--(3.918,5.858)--(4.023,5.718)--(4.127,5.586)--(4.232,5.463)%
  --(4.337,5.347)--(4.442,5.238)--(4.547,5.136)--(4.652,5.039)--(4.757,4.947)--(4.862,4.860)%
  --(4.967,4.778)--(5.072,4.699)--(5.177,4.625)--(5.282,4.554)--(5.387,4.487)--(5.492,4.423)%
  --(5.597,4.361)--(5.702,4.303)--(5.807,4.247)--(5.911,4.193)--(6.016,4.141)--(6.121,4.092)%
  --(6.226,4.045)--(6.331,3.999)--(6.436,3.955)--(6.541,3.913)--(6.646,3.873)--(6.751,3.834)%
  --(6.856,3.796)--(6.961,3.760)--(7.066,3.724)--(7.171,3.691)--(7.276,3.658)--(7.381,3.626)%
  --(7.486,3.596)--(7.590,3.566)--(7.695,3.537)--(7.800,3.509)--(7.905,3.482)--(8.010,3.456)%
  --(8.115,3.431)--(8.220,3.406)--(8.325,3.382)--(8.430,3.359)--(8.535,3.336)--(8.640,3.314)%
  --(8.745,3.293)--(8.850,3.272)--(8.955,3.252)--(9.060,3.232)--(9.165,3.213)--(9.270,3.194)%
  --(9.374,3.176)--(9.479,3.158)--(9.584,3.141)--(9.689,3.124)--(9.794,3.107)--(9.899,3.091)%
  --(10.004,3.075)--(10.109,3.060)--(10.214,3.045)--(10.319,3.030)--(10.424,3.016)--(10.529,3.002)%
  --(10.634,2.988)--(10.739,2.975)--(10.844,2.962)--(10.949,2.949)--(11.053,2.936)--(11.158,2.924)%
  --(11.263,2.912)--(11.368,2.900)--(11.473,2.888)--(11.578,2.877)--(11.683,2.866)--(11.788,2.855)%
  --(11.893,2.844);
\gpsetdashtype{gp dt solid}
\draw[gp path] (2.029,8.867)--(2.134,8.253)--(2.239,7.675)--(2.344,7.164)--(2.448,6.722)%
  --(2.553,6.339)--(2.658,6.007)--(2.763,5.718)--(2.868,5.463)--(2.973,5.238)--(3.078,5.039)%
  --(3.183,4.860)--(3.288,4.699)--(3.393,4.554)--(3.498,4.423)--(3.603,4.303)--(3.708,4.193)%
  --(3.813,4.092)--(3.918,3.999)--(4.023,3.913)--(4.127,3.834)--(4.232,3.760)--(4.337,3.691)%
  --(4.442,3.626)--(4.547,3.566)--(4.652,3.509)--(4.757,3.456)--(4.862,3.406)--(4.967,3.359)%
  --(5.072,3.314)--(5.177,3.272)--(5.282,3.232)--(5.387,3.194)--(5.492,3.158)--(5.597,3.124)%
  --(5.702,3.091)--(5.807,3.060)--(5.911,3.030)--(6.016,3.002)--(6.121,2.975)--(6.226,2.949)%
  --(6.331,2.924)--(6.436,2.900)--(6.541,2.877)--(6.646,2.855)--(6.751,2.834)--(6.856,2.813)%
  --(6.961,2.793)--(7.066,2.774)--(7.171,2.756)--(7.276,2.739)--(7.381,2.721)--(7.486,2.705)%
  --(7.590,2.689)--(7.695,2.674)--(7.800,2.659)--(7.905,2.644)--(8.010,2.630)--(8.115,2.617)%
  --(8.220,2.604)--(8.325,2.591)--(8.430,2.579)--(8.535,2.567)--(8.640,2.555)--(8.745,2.544)%
  --(8.850,2.533)--(8.955,2.522)--(9.060,2.512)--(9.165,2.501)--(9.270,2.491)--(9.374,2.482)%
  --(9.479,2.472)--(9.584,2.463)--(9.689,2.454)--(9.794,2.446)--(9.899,2.437)--(10.004,2.429)%
  --(10.109,2.421)--(10.214,2.413)--(10.319,2.405)--(10.424,2.398)--(10.529,2.390)--(10.634,2.383)%
  --(10.739,2.376)--(10.844,2.369)--(10.949,2.363)--(11.053,2.356)--(11.158,2.350)--(11.263,2.343)%
  --(11.368,2.337)--(11.473,2.331)--(11.578,2.325)--(11.683,2.319)--(11.788,2.314)--(11.893,2.308);
\gpcolor{color=gp lt color border}
\gpsetlinewidth{1.00}
\draw[gp path] (1.504,9.029)--(1.504,0.985)--(11.893,0.985)--(11.893,9.029)--cycle;
\gpdefrectangularnode{gp plot 1}{\pgfpoint{1.504cm}{0.985cm}}{\pgfpoint{11.893cm}{9.029cm}}
\end{tikzpicture}
\end{center}
\end{figure}%
Any point corresponding to a length of an interleaving of a Golay pair (including length $1$ for the interleaving of a trivial Golay pair) is represented as a filled square.
All these points have limiting $\ADF$ of $1/3$, as attested by parts \eqref{Abigail} and \eqref{Boris} of the theorem, and so lie on the dotted line at $1/3$.
The remaining points lie above $1/3$.
Of these, the ones corresponding to lengths that vanish modulo $4$ are plotted as unfilled squares, and the dot-dashed curve gives the corresponding lower bound on asymptotic $\ADF$ from part \eqref{Claudia} of the theorem.
Note that the lower bound is achieved at lengths $12$, $24$, and $48$, but not at lengths $28$, $36$, and $44$.
The points corresponding to odd lengths $\ell > 1$ are plotted as triangles, and the solid curve passing through them gives the corresponding lower bound on asymptotic $\ADF$ from part \eqref{Deidre} of the theorem.
This shows that the bound is met for all relevant lengths from $3$ to $51$.
The points corresponding to lengths $\ell > 2$ with $\ell \equiv 2 \pmod{4}$ are plotted as diamonds, and the dashed curve passing through them gives the corresponding lower bound on asymptotic $\ADF$ from part \eqref{Ellen} of the theorem.
This shows that the bound is met for all relevant lengths from $6$ to $50$.
Thus we see that we actually achieve equality in the lower bounds in \cref{Thomas} for all lengths $\ell$ from $1$ to $52$ with the exceptions of $28$, $36$, and $44$.
The failures in these cases imply the nonexistence of almost-complementary pairs of lengths $14$, $18$, and $22$, which is not surprising, since the conditions for almost-complementarity at even length are much more stringent than in the other cases.

The rest of this paper is organized as follows.
\cref{Ursula} provides a proof of \cref{Edward}.
\cref{William} provides a proof of \cref{Thomas}.
\cref{Xavier} concludes with some open questions about the existence of complementary and almost-complementary pairs.

\section{Proof of \cref{Edward}}\label{Ursula}

\cref{Edward} is the corollary of two technical lemmas, which we state and prove in this section.
For the rest of this paper, we adopt the convention that if $a(z)$ is a Laurent polynomial in $\C[z,z^{-1}]$, then $\widetilde{a}(z)$ is the Laurent polynomial $a(-z)$.
We also adopt the shorthand 
\[
\int a = \frac{1}{2\pi} \int_0^{2\pi} a(e^{i\theta}) d\theta,
\]
which is just the constant coefficient of $a(z)$.

Now we can state our first technical lemma, which is on norms of interleavings.
\begin{lemma}
Suppose that $g(z),h(z) \in \C[z,z^{-1}]$ and  $f(z)=g(z^2)+z h(z^2)$.
Then
\begin{align*}
\normtt{f} & = \normtt{g} + \normtt{h}, \text{ and} \\
\normff{f} + \normtt{f \widetilde{f}} & =  2 \normtt{\,|g(z)|^2+|h(z)|^2}.
\end{align*}
\end{lemma}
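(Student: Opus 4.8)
The plan is to move everything onto the complex unit circle, where each $L^p$ norm is an integral and each such integral is the constant coefficient $\int(\cdot)$ of a Laurent polynomial, and to base both identities on a single expansion of $|f(z)|^2$. Writing $f(z)=g(z^2)+zh(z^2)$ and using $\overline{f(z)}=\overline{g(z^2)}+z^{-1}\overline{h(z^2)}$ on the unit circle, I would multiply out $f(z)\overline{f(z)}$. Setting $P(z)=|g(z)|^2+|h(z)|^2$ and $q(z)=h(z)\overline{g(z)}$, so that $\overline{q(z)}=g(z)\overline{h(z)}$, this gives
\[
|f(z)|^2=P(z^2)+z\,q(z^2)+z^{-1}\overline{q(z^2)}.
\]
The structural fact that drives both parts is that $P(z^2)$ is even in $z$, while each cross term carries a single odd factor $z^{\pm1}$ times a Laurent polynomial in $z^2$, hence consists entirely of odd powers of $z$.

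For the first identity I would simply integrate the display above. The two cross terms contribute nothing to $\int|f(z)|^2=\normtt{f}$ because they have no constant coefficient, whereas $\int P(z^2)$ equals the constant coefficient of $P$, which is $\normtt{g}+\normtt{h}$. This yields $\normtt{f}=\normtt{g}+\normtt{h}$.

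For the second identity the key idea is to combine the two quantities \emph{before} expanding: since $|f(z)\widetilde{f}(z)|^2=|f(z)|^2\,|\widetilde{f}(z)|^2$, one has
\[
\normff{f}+\normtt{f\widetilde{f}}=\int |f(z)|^2\bigl(|f(z)|^2+|\widetilde{f}(z)|^2\bigr).
\]
Because $\widetilde{f}(z)=f(-z)=g(z^2)-zh(z^2)$, replacing $z$ by $-z$ in the expansion of $|f|^2$ negates exactly the two odd cross terms, so that $|f(z)|^2+|\widetilde{f}(z)|^2=2P(z^2)$. Substituting this and again using that the odd part of $|f(z)|^2$ integrates to zero against the even factor $P(z^2)$, the integral collapses to $2\int P(z^2)^2=2\normtt{\,|g(z)|^2+|h(z)|^2}$, the claimed right-hand side; here $\int P(z^2)^2=\int P(z)^2=\normtt{P}$ because constant coefficients are preserved under $z\mapsto z^2$ and $P$ is self-conjugate.

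I expect the main obstacle to be organizational rather than computational: recognizing that $\normff{f}$ and $\normtt{f\widetilde{f}}$ should be added before expanding, so that the reflection $z\mapsto-z$ isolates the even part $P(z^2)$ and annihilates every odd cross term at once, which is what turns an apparently messy fourth-power computation into the integral of $2|f(z)|^2 P(z^2)$. The only remaining care is bookkeeping with the conjugate-reciprocal notation and the repeated observation that a single odd factor $z^{\pm1}$ multiplying a Laurent polynomial in $z^2$ leaves no constant coefficient.
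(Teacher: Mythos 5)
Your proof is correct and follows essentially the same route as the paper: both arguments add $\normff{f}$ and $\normtt{f\widetilde{f}}$ into the single integral $\int |f|^2\bigl(|f|^2+|\widetilde{f}|^2\bigr)$, collapse $|f(z)|^2+|f(-z)|^2$ to $2\bigl(|g(z^2)|^2+|h(z^2)|^2\bigr)$, kill the odd cross terms by parity, and finish with $\int a(z^2)^2=\int a(z)^2=\normtt{a}$ using self-conjugacy of $a=|g|^2+|h|^2$. The only cosmetic difference is in the first identity, where the paper reads it off directly from the interleaved coefficients while you integrate the expansion of $|f|^2$; both are immediate.
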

\begin{proof}
The first identity is clear because $f$ is the interleaving of $g$ with $h$, so the sum of the squared magnitudes of its coefficients is equal to sum of squared magnitudes of the coefficients of both $g$ and $h$.
For the second identity, note that
\begin{align*}
\normff{f} + \normtt{f \widetilde{f}}
& = \int |f(z)|^2 \left(|g(z^2)+z h(z^2)|^2 + |g(z^2)-z h(z^2)|^2\right) \\
& = 2 \int |f(z)|^2  \left(|g(z^2)|^2 + |z h(z^2)|^2\right).
\end{align*}
Now note that $|z h(z^2)|^2=|h(z^2)|^2$ since $|z|=1$ on the complex unit circle.  We set $a(z)=|g(z)|^2+|h(z)|^2$, and so
\begin{align*}
\normff{f} + \normtt{f \widetilde{f}}
& = 2 \int |g(z^2)+z h(z^2)|^2  a(z^2)  \\
& = 2 \int \left(a(z^2) + g(z^2) z^{-1} \conj{h(z^2)} + \conj{g(z^2)} z h(z^2) \right)  a(z^2),
\end{align*}
and since $\int z^j=0$ when $j\not=0$, we can drop the terms that have odd degree to obtain
\begin{align*}
\normff{f} + \normtt{f \widetilde{f}}
& = 2 \int a(z^2)^2 \\
& = 2 \int a(z)^2 \\\
& = 2 \normtt{\,|g(z)|^2+|h(z)|^2},
\end{align*}
where we have used the fact that $\int z^{2 j}= \int z^j$ for every $j \in \Z$ in the second equality, and recall that $a(z)=|g(z)|^2+|h(z)|^2$ (which is the same as $\conj{a(z)}$) in the third.
\end{proof}
If we use this lemma in conjunction with the equality in \cref{Arthur}, we obtain a new expression for the limiting autocorrelation demerit factor.
\begin{corollary}\label{Dennis}
Let $f_0(z) \in \C[z]$ be a polynomial with a nonzero constant coefficient, and let $g(z), h(z) \in \C[z]$ be the polynomials such that $f_0$ is the interleaving of $g$ and $h$.
If $f_0,f_1,\ldots$ is the sequence of Rudin-Shapiro-like polynomials generated via recursion \eqref{Abraham} with any sign sequence, then
\[
\lim_{n\to \infty} \ADF(f_n)
= -1 + \frac{4}{3} \cdot \frac{\normtt{\,|g|^2+|h|^2}}{\left(\normtt{g}+\normtt{h}\right)^2}.
\]
\end{corollary}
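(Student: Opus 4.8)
The plan is to obtain \cref{Dennis} by feeding the two identities of the preceding lemma into the formula of \cref{Arthur}, with the seed $f_0$ itself playing the role of the interleaved polynomial $f$. The whole argument is a substitution, so there is no serious obstacle; the only care required is bookkeeping with the exponents encoded in the norm shorthands and checking that the hypotheses of the lemma and of \cref{Arthur} are met simultaneously by the single polynomial $f_0$.

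First I would observe that, since $f_0$ is by hypothesis the interleaving of $g$ with $h$, we have $f_0(z) = g(z^2) + z\, h(z^2)$, so the preceding lemma applies with $f = f_0$. The lemma is stated for $g,h \in \C[z,z^{-1}]$, and the present $g,h \in \C[z]$ are a special case, so nothing is lost. I would also note that the nonzero constant coefficient of $f_0$ is precisely the constant coefficient of $g$, so the seed is of the form required by \cref{Arthur} and nothing degenerates.

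Next I would recall from \cref{Arthur} that
\[
\lim_{n\to\infty} \ADF(f_n) = -1 + \frac{2}{3}\cdot\frac{\normff{f_0}+\normtt{f_0\widetilde{f_0}}}{\normtf{f_0}},
\]
and treat numerator and denominator separately. The numerator is exactly the left-hand side of the second identity of the lemma, which gives $\normff{f_0}+\normtt{f_0\widetilde{f_0}} = 2\,\normtt{\,|g|^2+|h|^2}$. For the denominator, the first identity of the lemma gives $\normtt{f_0} = \normtt{g}+\normtt{h}$; since $\normtf{f_0} = (\normtt{f_0})^2$, squaring yields $\normtf{f_0} = \left(\normtt{g}+\normtt{h}\right)^2$.

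Finally I would substitute both expressions into the formula, so that the constant $\tfrac{2}{3}$ absorbs the factor $2$ from the numerator to produce $\tfrac{4}{3}$, giving
\[
\lim_{n\to\infty}\ADF(f_n) = -1 + \frac{4}{3}\cdot\frac{\normtt{\,|g|^2+|h|^2}}{\left(\normtt{g}+\normtt{h}\right)^2},
\]
which is the claimed identity. The step most easily mishandled is the denominator, where one must read $\normtf{f_0}$ as the square of $\normtt{f_0}$ rather than as an $L^4$ quantity; once the shorthands are interpreted correctly, the computation closes immediately.
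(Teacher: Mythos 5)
Your proposal is correct and takes essentially the same route as the paper: \cref{Dennis} is obtained precisely by substituting the interleaving lemma's two identities into \cref{Arthur}, with the $\tfrac{4}{3}$ arising from $\tfrac{2}{3}\cdot 2$ and the denominator from squaring $\normtt{f_0}=\normtt{g}+\normtt{h}$. You also correctly read the numerator in \cref{Arthur} as the sum $\normff{f_0}+\normtt{f_0\widetilde{f_0}}$ (the paper's display omits the $+$ sign, a typo), which is the interpretation required both dimensionally and for the computation to close.
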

Our second technical lemma bounds the ratio of norms in \cref{Dennis}.
\begin{lemma}\label{Cecilia}
If $g(z), h(z) \in \C[z]$, then
\[
\normtt{\,|g(z)|^2+|h(z)|^2} \geq \left(\normtt{g}+\normtt{h}\right)^2,
\]
with equality if and only if $(g,h)$ is a Golay complementary pair.
\end{lemma}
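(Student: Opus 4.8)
The plan is to reduce the claimed inequality to a one-variable statement about the single function $a(z)=|g(z)|^2+|h(z)|^2$ on the unit circle, and then invoke Cauchy--Schwarz. First I would record the two bookkeeping identities that rewrite both sides in terms of $a$. Since $a$ is real-valued on the circle, we have $|a|^2=a^2$, so the left-hand side is $\normtt{a}=\int a^2$; and since $\normtt{g}=\int|g|^2$ and $\normtt{h}=\int|h|^2$, the right-hand factor satisfies $\normtt{g}+\normtt{h}=\int(|g|^2+|h|^2)=\int a$. Thus the entire lemma is equivalent to the assertion
\[
\int a^2 \geq \left(\int a\right)^2 .
\]

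Next I would observe that this is exactly Cauchy--Schwarz for the inner product $\langle u,v\rangle=\int u\conj{v}$ applied to the pair $a$ and the constant function $1$: because $\int 1=1$, one gets $\left(\int a\right)^2=|\langle a,1\rangle|^2\leq\langle a,a\rangle\langle 1,1\rangle=\int a^2$. (Equivalently, this is Jensen's inequality for the convex map $t\mapsto t^2$ against the probability measure $d\theta/(2\pi)$, i.e.\ the nonnegativity of the variance of $a$.) This immediately yields the desired bound.

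Finally, the equality analysis is the only step requiring a little care, and I regard it as the main, though modest, obstacle. Equality in Cauchy--Schwarz forces $a$ and $1$ to be linearly dependent, so $a$ agrees with a constant almost everywhere on the circle. Because $a$ is a Laurent polynomial -- a finite trigonometric sum -- agreement with a constant almost everywhere upgrades to the statement that $a$ is genuinely a constant polynomial. By the definition of a Golay complementary pair, $|g(z)|^2+|h(z)|^2$ being constant is precisely the condition that $(g,h)$ is complementary, which closes the equality case in both directions.
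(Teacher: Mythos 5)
Your proof is correct, and it shares the paper's key reduction: both you and the paper observe that, writing $a=|g|^2+|h|^2$, the quantity $\normtt{g}+\normtt{h}$ is the constant coefficient $a_0=\int a$ of $a$, so the lemma is a statement about $a$ alone. Where you diverge is in how the inequality is then established. The paper works coefficient-wise via Parseval: $\normtt{a}=\sum_{s\in\Z}|a_s|^2\geq|a_0|^2$, with equality exactly when every nonconstant coefficient vanishes---which is verbatim the Golay condition, since by \eqref{Bernard} those coefficients are the sums $C_{g,g}(s)+C_{h,h}(s)$ of autocorrelations at nonzero shifts. You instead stay on the function side and apply Cauchy--Schwarz against the constant function $1$ (equivalently, Jensen, i.e.\ nonnegativity of the variance of $a$), and in the equality case you upgrade ``$a$ constant almost everywhere'' to ``$a$ is a constant Laurent polynomial,'' a step that is legitimate because a trigonometric polynomial that vanishes almost everywhere vanishes identically. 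Your route is marginally more general and self-contained: it uses nothing about $a$ beyond square-integrability until that final upgrade, and it never mentions Fourier coefficients. The paper's route buys something the paper actually needs later: it exhibits the slack in the inequality explicitly as $\sum_{s\neq 0}|a_s|^2$, a sum of squared autocorrelations, and it is precisely this coefficient-level expression that \cref{William} sharpens (in Propositions \ref{Lawrence}--\ref{Katherine}, via congruence constraints on the $a_s$) to obtain the quantitative lower bounds of \cref{Thomas} for seeds that cannot be interleavings of Golay pairs. So while both arguments are two-line proofs of ``second moment $\geq$ mean squared,'' the paper's formulation is the one that generalizes within the paper, whereas yours is the cleaner standalone argument.
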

\begin{proof}
The inequality stems from the fact that $\normtt{\,|g(z)|^2+|h(z)|^2}$ is the sum of the squared magnitudes of the coefficients of $|g(z)|^2+|h(z)|^2$, while the real number $\normtt{g}+\normtt{h}$ is the constant coefficient of $|g(z)|^2+|h(z)|^2$.
And then it follows that equality is achieved if and only if all nonconstant coefficients of $|g(z)|^2+|h(z)|^2$ are zero, that is, if and only if $(g,h)$ is a Golay complementary pair.
\end{proof}
Applying \cref{Cecilia} to the ratio of norms in \cref{Dennis} gives the following result, which is \cref{Edward} of the Introduction.
\begin{corollary}\label{Veronica}
Let $(f_0,f_1,\ldots)$ be a sequence of Rudin-Shapiro polynomials generated from seed $f_0 \in \C[z]$.
Then $\lim_{n\to \infty} \ADF(f_n) \geq 1/3$, with equality if and only if $f_0$ the interleaving of a Golay complementary pair.
\end{corollary}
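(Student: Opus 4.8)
The plan is to deduce the statement as a clean formal consequence of the two preceding results, \cref{Dennis} and \cref{Cecilia}, which already contain all of the analytic content. First I would record that every $f_0 \in \C[z]$ is the interleaving of a uniquely determined pair $g(z),h(z) \in \C[z]$: taking $g$ to be the polynomial formed from the even-indexed coefficients of $f_0$ and $h$ the one formed from the odd-indexed coefficients, we have $f_0(z)=g(z^2)+z h(z^2)$. This is simply the splitting of $f_0$ into its even and odd parts, and it shows that the phrase ``the pair of which $f_0$ is the interleaving'' is legitimate and unambiguous. (The hypothesis that $f_0$ has nonzero constant coefficient guarantees $g \ne 0$, though this is not even needed for the inequality itself.)

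Next I would invoke \cref{Dennis} to rewrite the limiting demerit factor entirely in terms of norms of the interleaved parts,
\[
\lim_{n\to \infty} \ADF(f_n) = -1 + \frac{4}{3} \cdot \frac{\normtt{\,|g|^2+|h|^2}}{\left(\normtt{g}+\normtt{h}\right)^2},
\]
so that the whole question reduces to controlling the single ratio on the right. I would then apply \cref{Cecilia}, which asserts exactly that $\normtt{\,|g|^2+|h|^2} \geq \left(\normtt{g}+\normtt{h}\right)^2$; hence the ratio is at least $1$, and substituting this lower bound into the display gives $\lim_{n\to\infty}\ADF(f_n) \geq -1 + 4/3 = 1/3$, which is the desired inequality.

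Finally, the equality clause requires only a moment of care. Because $t \mapsto -1 + (4/3)t$ is strictly increasing, equality $\lim_{n\to\infty}\ADF(f_n)=1/3$ holds precisely when the ratio equals $1$, that is, precisely when equality holds in \cref{Cecilia}. The sharpness clause of \cref{Cecilia} identifies this with the condition that $(g,h)$ is a Golay complementary pair, which by the decomposition in the first step is the same as saying $f_0$ is the interleaving of a Golay complementary pair. I do not expect any genuine obstacle in this argument: the inequality on $L^2$ norms together with its equality characterization, and the reduction of the asymptotic demerit factor to the norms of the interleaved parts, have already been isolated in \cref{Cecilia} and \cref{Dennis} respectively, so the corollary follows by direct substitution. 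The only point meriting attention is ensuring that the ``if and only if'' transfers cleanly through the strictly monotone affine function, which the observation above handles.
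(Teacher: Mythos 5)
Your proposal is correct and is essentially the paper's own proof: the paper obtains \cref{Veronica} by applying \cref{Cecilia} to the ratio of norms appearing in \cref{Dennis}, exactly as you do, with your remarks on the uniqueness of the even/odd decomposition and the monotonicity of $t \mapsto -1+\tfrac{4}{3}t$ simply making explicit the routine details the paper leaves implicit.
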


\section{Proof of \cref{Thomas}}\label{William}

In this section we characterize almost-complementary pairs and then use this characterization to prove \cref{Thomas}.
The specific conditions for almost-complementarity in \cref{Yolanda} were chosen to yield pairs $(g,h)$ such that $|g|^2+|h|^2$ is nonconstant but has an $L^2$ norm as small as certain necessary conditions on $|g|^2+|h|^2$ allow.
Though necessary, these conditions are not sufficient to guarantee existence of almost-complementary pairs.
The first set of results (\cref{Fiona}--\cref{Henriette}) set the stage for exploring these conditions via congruences modulo $4$ for coefficients of Laurent polynomials involved in correlation calculations.
\begin{lemma}\label{Fiona}
Let $f(z)$ and $g(z)$ be Littlewood polynomials of length $m$ and let $h(z)=\sum_{s \in \Z} h_s z^s = f(z) \conj{g(z)}$.  Then $h_s=0$ if $|s| \geq m$.  If $0 \leq s \leq m$, then
\[
h_s \equiv s-m+ \sum_{j=s}^{m-1} f_j + \sum_{k=0}^{m-1-s} g_k \pmod{4},
\]
and if $-m \leq s \leq 0$, then
\[
h_s \equiv -s-m+ \sum_{j=0}^{m-1+s} f_j + \sum_{k=-s}^{m-1} g_k \pmod{4}.
\]
\end{lemma}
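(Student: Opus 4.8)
The plan is to expand $h_s$ straight from the definition $h(z)=f(z)\conj{g(z)}$ and then reduce the resulting sum of $\pm1$ products modulo $4$ to a sum of individual coefficients. First I would note that, since $f$ and $g$ are Littlewood of length $m$, they are genuine polynomials $f(z)=\sum_{j=0}^{m-1}f_j z^j$ and $g(z)=\sum_{k=0}^{m-1}g_k z^k$ with real coefficients, so $\conj{g(z)}=\sum_{k=0}^{m-1}g_k z^{-k}$ and hence $h_s=\sum f_j g_k$, where the sum runs over all pairs $(j,k)$ with $0\le j,k\le m-1$ and $j-k=s$. The vanishing claim $h_s=0$ for $|s|\ge m$ is then immediate, because $j-k$ lies between $-(m-1)$ and $m-1$.

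The key step is the elementary congruence that for any $x,y\in\{1,-1\}$ one has $xy\equiv x+y-1\pmod 4$; this is equivalent to $(x-1)(y-1)\in\{0,4\}$, which is divisible by $4$ since each factor lies in $\{0,-2\}$. This identity converts every product $f_j g_k$ into the linear expression $f_j+g_k-1$ modulo $4$, replacing the awkward products by sums of coefficients.

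Next I would carry out the two index cases separately. For $0\le s\le m$ the surviving pairs are parametrized by $k$ with $j=k+s$, so $k$ runs from $0$ to $m-1-s$ and there are exactly $m-s$ terms; applying the congruence term by term gives
\[
h_s\equiv\sum_{k=0}^{m-1-s}\bigl(f_{k+s}+g_k-1\bigr)\pmod 4,
\]
where the $-1$'s contribute $-(m-s)=s-m$ and reindexing $j=k+s$ turns the first sum into $\sum_{j=s}^{m-1}f_j$, yielding the asserted formula. The case $-m\le s\le 0$ is the mirror image: I would parametrize by $j$ with $k=j-s$ (and $-s\ge0$), count $m+s$ terms, and after the same reduction and reindexing $k=j-s$ obtain the second displayed congruence.

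The only thing requiring care---more bookkeeping than genuine obstacle---is keeping the index ranges and term counts straight across the two parametrizations for positive versus negative $s$, and checking that the boundary values $s=0$ and $s=\pm m$ are consistent with both formulas (for instance $s=m$ forces empty sums and recovers $h_m=0$). Once the congruence $xy\equiv x+y-1\pmod4$ is established, no real difficulty remains.
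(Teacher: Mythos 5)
Your proof is correct and follows essentially the same route as the paper's: expand $h_s$ as a sum of products $f_jg_k$ over $j-k=s$, apply the congruence $uv\equiv u+v-1\pmod 4$ for $u,v\in\{1,-1\}$ term by term, and reindex. The only differences are cosmetic (you justify the key congruence, which the paper asserts, and you parametrize the negative-shift case by the $f$-index rather than the $g$-index), so nothing further is needed.
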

\begin{proof}
Since the terms of $f(z)$ have degree from $0$ to $m-1$, and the terms of $\conj{g(z)}$ have degree from $-(m-1)$ to $0$, we see that $h(z)=f(z)\conj{g(z)}$ can have no monomial whose degree is lower than $-(m-1)$ or higher than $m-1$.
If $0 \leq s \leq  m$, then  $h_s = \sum_{j=0}^{m-1-s} f_{j+s} g_j$.
Note that if $u,v \in \{1,-1\}$, then $u v \equiv u+v-1 \pmod{4}$, so
\begin{align*}
h_s
& \equiv \sum_{j=0}^{m-1-s} (f_{j+s} + g_j -1) \pmod{4} \\
& = s-m + \sum_{j=s}^{m-1} f_j + \sum_{k=0}^{m-1-s} g_k.
\end{align*}
If $-m \leq s \leq 0$, then one proceeds similarly with
\begin{align*}
h_s
& = \sum_{j=-s}^{m-1} f_{j+s} g_j \\
& \equiv \sum_{j=-s}^{m-1} (f_{j+s} + g_j -1) \pmod{4} \\
& = -s-m + \sum_{j=0}^{m-1+s} f_j + \sum_{k=-s}^{m-1} g_k. \qedhere
\end{align*}
\end{proof}
Sometimes we are interested in the sum of coefficients whose indices differ by $m$; this is related to another form of correlation called {\it periodic correlation}, which differs from the aperiodic correlation considered in this paper.
\begin{lemma}
Let $f(z)$ and $g(z)$ be Littlewood polynomials of length $m$ and let $h(z)=\sum_{s \in \Z} h_s z^s = f(z) \conj{g(z)}$.  If $0 \leq s \leq m$, then
\[
h_s+h_{s-m} \equiv -m + f(1) + g(1) \pmod{4}.
\]
\end{lemma}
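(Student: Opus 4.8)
The plan is to derive the identity directly from \cref{Fiona}, which already supplies congruences modulo $4$ for the individual coefficients $h_s$ in both ranges $0 \leq s \leq m$ and $-m \leq s \leq 0$. The shift $s-m$ appearing in the statement satisfies $-m \leq s-m \leq 0$ exactly when $0 \leq s \leq m$, so both $h_s$ and $h_{s-m}$ are covered by the two cases of \cref{Fiona}. The whole argument therefore reduces to applying that lemma twice and adding the two congruences.

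Concretely, first I would record that $0 \leq s \leq m$ places $h_s$ in the first case of \cref{Fiona}, giving $h_s \equiv s-m+\sum_{j=s}^{m-1} f_j + \sum_{k=0}^{m-1-s} g_k \pmod{4}$. Then I would specialize the second case of \cref{Fiona} to the shift $s-m$; after simplifying the bounds (the linear term $-(s-m)-m$ becomes $-s$, the $f$-sum runs from $0$ to $s-1$, and the $g$-sum runs from $m-s$ to $m-1$), this yields $h_{s-m} \equiv -s + \sum_{j=0}^{s-1} f_j + \sum_{k=m-s}^{m-1} g_k \pmod{4}$.

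Adding the two congruences is where the telescoping happens. The two linear terms combine as $(s-m)+(-s)=-m$. The two partial sums of the $f_j$ run over the complementary ranges $\{s,\ldots,m-1\}$ and $\{0,\ldots,s-1\}$, whose disjoint union is $\{0,\ldots,m-1\}$, so they add to $\sum_{j=0}^{m-1} f_j = f(1)$. Likewise the two partial sums of the $g_k$ run over $\{0,\ldots,m-1-s\}$ and $\{m-s,\ldots,m-1\}$, which abut at the gap between indices $m-1-s$ and $m-s$ and so are disjoint and exhaustive, giving $\sum_{k=0}^{m-1} g_k = g(1)$. Collecting the pieces produces exactly $h_s+h_{s-m}\equiv -m+f(1)+g(1)\pmod{4}$.

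I do not expect a genuine obstacle; the argument is essentially bookkeeping on the summation ranges handed to us by \cref{Fiona}. The only points that warrant a moment's care are verifying that the two $g$-ranges are disjoint and together cover $\{0,\ldots,m-1\}$, and confirming that the degenerate cases $s=0$ and $s=m$, where one of the partial sums is empty, still reproduce the stated congruence. Both checks are immediate.
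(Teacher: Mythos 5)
Your proposal is correct and is essentially identical to the paper's own proof: both apply the two cases of \cref{Fiona} to $h_s$ and $h_{s-m}$ and add the congruences, with the linear terms combining to $-m$ and the complementary partial sums of the $f_j$ and $g_k$ telescoping to $f(1)$ and $g(1)$. Your extra remarks on the degenerate cases $s=0$ and $s=m$ are fine but not needed.
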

\begin{proof}
Since $0 \leq s \leq m$, we see that $-m \leq s-m \leq 0$, and then \cref{Fiona} gives
\begin{align*}
h_s+h_{s-m}
& \equiv s-m+ \sum_{j=s}^{m-1} f_j + \sum_{k=0}^{m-1-s} g_k -s + \sum_{j=0}^{s-1} f_j + \sum_{k=m-s}^{m-1} g_k \pmod{4} \\
& = -m + \sum_{j=0}^{m-1} f_j + \sum_{k=0}^{m-1} g_k. \qedhere
\end{align*}
\end{proof}
When one takes $g(z)=f(z)$ in the previous result, one obtains a congruence that is critical to our analysis of almost-complementary pairs.
\begin{corollary}\label{Henriette}
Let $f(z)$ be a Littlewood polynomial of length $m$ and let $h(z)=\sum_{s \in \Z} h_s z^s= |f(z)|^2$.
If $0 \leq s \leq m$, then $h_s+h_{s-m} \equiv m \pmod{4}$.
\end{corollary}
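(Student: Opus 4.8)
The plan is to obtain this congruence as an immediate specialization of the lemma immediately preceding it, applied with $g=f$. First I would observe that when $g(z)=f(z)$, the product $h(z)=f(z)\conj{g(z)}$ becomes $f(z)\conj{f(z)}=|f(z)|^2$, so the Laurent polynomial $h$ in the preceding lemma coincides with the one in the statement here. Substituting $g=f$ into the conclusion $h_s+h_{s-m}\equiv -m+f(1)+g(1)\pmod 4$ of that lemma then yields
\[
h_s+h_{s-m}\equiv -m+2f(1)\pmod 4
\]
for every $s$ with $0\leq s\leq m$.

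The only remaining point is to evaluate $2f(1)$ modulo $4$, and this is where I would invoke the fact that $f$ is Littlewood of length $m$. Writing $f(1)=\sum_{j=0}^{m-1} f_j$ with each $f_j\in\{1,-1\}$, I note that every summand is odd, so $f(1)$ is a sum of $m$ odd integers and hence $f(1)\equiv m\pmod 2$. Multiplying by $2$ promotes this parity statement to a congruence modulo $4$, giving $2f(1)\equiv 2m\pmod 4$. Substituting this into the display above produces
\[
h_s+h_{s-m}\equiv -m+2m=m\pmod 4,
\]
as desired.

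I do not anticipate any real obstacle here, since the result is essentially a corollary of the preceding lemma; the only step requiring a moment's thought is the parity observation $f(1)\equiv m\pmod 2$, which is what lets the term $2f(1)$ collapse to $2m$ modulo $4$ and thereby eliminates the dependence on the specific sign pattern of $f$. I would present the argument in these two short moves—specialize, then reduce $2f(1)$—without any further computation.
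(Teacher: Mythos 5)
Your proposal is correct and follows essentially the same route as the paper's own proof: specialize the preceding lemma to $g=f$, then note $f(1)\equiv m\pmod 2$ (hence $2f(1)\equiv 2m\pmod 4$) to collapse $-m+2f(1)$ to $m$ modulo $4$. The only cosmetic difference is that you justify the parity of $f(1)$ by summing $m$ odd coefficients, whereas the paper cites $f(z)\equiv 1+z+\cdots+z^{m-1}\pmod 2$; these are the same observation.
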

\begin{proof}
Since a Littlewood polynomial $f(z)$ of length $m$ has $f(z) \equiv 1+z+\cdots + z^{m-1} \pmod{2}$, we see that $f(1) \equiv m \pmod{2}$, and so $2 f(1) \equiv 2 m \pmod{4}$, and the previous lemma tells us that $h_s+h_{s-m} \equiv -m + 2 f(1) \pmod{4}$.
\end{proof}
Sometimes we only need congruences on coefficients modulo $2$, which are provided in the next two results, \cref{Nancy} and \cref{Gerald}.
\begin{lemma}\label{Nancy}
Let $f(z)$ and $g(z)$ be Littlewood polynomials of length $m$, let $h(z)=\sum_{s \in \Z} h_s z^s = f(z) \conj{g(z)}$.  Then $h_s=0$ if $|s| \geq m$.  If $-m \leq s \leq m$, then $h_s \equiv s+m \pmod{2}$.
\end{lemma}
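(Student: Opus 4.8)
The first assertion, that $h_s=0$ whenever $|s|\geq m$, is already implicit in the proof of \cref{Fiona}, so I would simply restate the observation: since the monomials of $f(z)$ range over degrees $0$ through $m-1$ and those of $\conj{g(z)}$ range over degrees $-(m-1)$ through $0$, the product $f(z)\conj{g(z)}$ can have no monomial of degree below $-(m-1)$ or above $m-1$, whence $h_s=0$ for $|s|\geq m$.

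For the congruence the cleanest route is a direct count of overlapping terms. Writing out the convolution, for $0\leq s\leq m$ one has $h_s=\sum_{j=0}^{m-1-s} f_{j+s}g_j$, a sum of exactly $m-s$ terms, while for $-m\leq s\leq 0$ one has $h_s=\sum_{j=-s}^{m-1} f_{j+s}g_j$, a sum of exactly $m+s$ terms; in either case the number of terms is $m-|s|$. Since $f$ and $g$ are Littlewood, each summand $f_{j+s}g_j$ is a product of two elements of $\{1,-1\}$ and hence is itself $\pm 1$, so each summand is $\equiv 1\pmod 2$. Therefore $h_s$ is congruent modulo $2$ to the number of its summands, that is, $h_s\equiv m-|s|\pmod 2$. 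Finally, because $|s|\equiv s\pmod 2$, we have $m-|s|\equiv m+s\pmod 2$, which is exactly the claimed congruence.

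Alternatively I would derive the same conclusion by reducing the modulo-$4$ statement of \cref{Fiona} modulo $2$: each $f_j$ and each $g_k$ is $\equiv 1\pmod 2$, so the two sums appearing there collapse to their numbers of terms, and a short parity computation recovers $h_s\equiv s+m\pmod 2$. Either way there is no genuine obstacle here; the only point requiring care is the bookkeeping of the number of overlapping indices and the observation that $s$ and $|s|$ have the same parity.
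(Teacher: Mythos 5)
Your proof is correct, and your ``alternative'' argument---reducing the congruences of \cref{Fiona} modulo $2$ and noting that each summation collapses to its number of terms, namely $m-|s|$---is precisely the paper's own proof. Your primary argument, which counts the $m-|s|$ summands $f_{j+s}g_j\equiv 1\pmod{2}$ directly in the convolution rather than citing \cref{Fiona}, is merely a self-contained variant of the same parity count, so the two routes are essentially the same.
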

\begin{proof}
When one reduces the congruences in \cref{Fiona} modulo $2$, one notes that all the summations involve $m-|s|$ terms from $\{1,-1\}$, so each summation is $m-|s|$ modulo $2$.
\end{proof}
\begin{corollary}\label{Gerald}
Let $f(z)$ be a Littlewood polynomial of length $m$ and let $h(z)=\sum_{s \in \Z} h_s z^s= |f(z)|^2$.
Then $h_s=0$ if $|s| \geq m$ and $h_0=m$.
If $|s| \leq m$, then $h_s \equiv s+m \pmod{2}$.
\end{corollary}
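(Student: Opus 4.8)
The plan is to obtain this corollary as the immediate specialization of \cref{Nancy} to the case $g=f$, supplemented by a direct computation of the constant coefficient. First I would set $g(z)=f(z)$ in \cref{Nancy}; under the convention introduced earlier that $|a(z)|^2=a(z)\conj{a(z)}$, this makes $h(z)=f(z)\conj{f(z)}=|f(z)|^2$, which is precisely the Laurent polynomial named in the corollary. \cref{Nancy} then hands us two of the three assertions for free: that $h_s=0$ whenever $|s|\geq m$, and that $h_s\equiv s+m\pmod{2}$ whenever $-m\leq s\leq m$.

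The remaining task is to pin down the exact value $h_0=m$, which refines the parity statement $h_0\equiv m\pmod{2}$ that \cref{Nancy} already supplies at $s=0$. Here I would invoke \eqref{Bernard}, which identifies the coefficients of $|f(z)|^2$ with the autocorrelation values of $f$, so that $h_0=C_{f,f}(0)=\sum_{j\in\Z}|f_j|^2$. Because $f$ is a Littlewood polynomial of length $m$, it has exactly $m$ coefficients, each of magnitude $1$, and hence this sum equals $m$.

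There is essentially no obstacle to overcome: every ingredient is already in place, and the only point requiring any care is the bookkeeping that substituting $g=f$ into the product $f\conj{g}$ does indeed yield $|f|^2$ in the sense fixed by our conventions on Laurent polynomials. With that observation, the three claims of the corollary follow at once.
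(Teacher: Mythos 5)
Your proposal is correct and follows essentially the same route as the paper: both apply \cref{Nancy} with $g=f$ to obtain the vanishing and parity claims, and then compute $h_0$ as the sum of the squared magnitudes of the $m$ unit-magnitude coefficients of $f$ (your detour through \eqref{Bernard} and $C_{f,f}(0)$ is just a different phrasing of that same observation).
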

\begin{proof}
Apply the previous lemma with $g(z)=f(z)$, and note that $h_0$ is the sum of the squared magnitudes of the coefficients of $f$, which are $m$ elements from $\{1,-1\}$.
\end{proof}
In the next two results, we apply the above congruences to $|g|^2+|h|^2$ for pairs $(g,h)$ of Littlewood polynomials.
Pairs where $g$ and $h$ are of equal length are covered by \cref{Imogene}, while those where $\len g$ and $\len h$ differ by one are considered in \cref{James}.
\begin{lemma}\label{Imogene}
Let $g(z)$ and $h(z)$ be Littlewood polynomials of length $m$, and let $f(z)=\sum_{s \in \Z} f_s z^s = |g(z)|^2+|h(z)|^2$.  Then $f_s$ is even for every $s \in \Z$, with $f_0=2 m$ and $f_s=0$ if $|s| \geq m$.  If $0 \leq s \leq m$, then $f_s+f_{s-m} \equiv 2m \pmod{4}$.
\end{lemma}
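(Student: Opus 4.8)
The plan is to reduce everything to \cref{Gerald} and \cref{Henriette} applied separately to $g$ and to $h$, exploiting that the coefficient-extraction map $f \mapsto f_s$ is additive, so that the coefficients of $f=|g|^2+|h|^2$ are termwise sums of the coefficients of $|g|^2$ and $|h|^2$. First I would write $|g(z)|^2=\sum_{s\in\Z} p_s z^s$ and $|h(z)|^2=\sum_{s\in\Z} q_s z^s$, so that $f_s=p_s+q_s$ for every $s\in\Z$.

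For the support, the value at $s=0$, and the parity claim, I would invoke \cref{Gerald} once for $g$ and once for $h$. It gives $p_s=q_s=0$ whenever $|s|\geq m$, hence $f_s=0$ there; it gives $p_0=q_0=m$, hence $f_0=2m$; and for $|s|\leq m$ it gives $p_s\equiv q_s\equiv s+m \pmod 2$, so that $f_s\equiv 2(s+m)\equiv 0 \pmod 2$. Since $f_s$ also vanishes (and so is even) when $|s|\geq m$, the coefficient $f_s$ is even for every $s\in\Z$.

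For the final congruence I would apply \cref{Henriette} separately to $g$ and to $h$: for $0\leq s\leq m$ it yields $p_s+p_{s-m}\equiv m \pmod 4$ and $q_s+q_{s-m}\equiv m \pmod 4$. Adding these two congruences and regrouping by the additivity of the coefficients gives $f_s+f_{s-m}=(p_s+p_{s-m})+(q_s+q_{s-m})\equiv 2m \pmod 4$, as desired.

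There is essentially no serious obstacle here: the lemma is a direct bookkeeping consequence of the two cited results, and the only thing to watch is keeping the index ranges consistent, so that both $s$ and $s-m$ lie in the intervals where the cited congruences apply, while noting that the evenness assertion is automatic in the range $|s|\geq m$ where all the coefficients already vanish.
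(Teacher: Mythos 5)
Your proposal is correct and follows essentially the same route as the paper: apply \cref{Gerald} to each of $g$ and $h$ to get the support, the constant coefficient, and the parity, then apply \cref{Henriette} to each of $g$ and $h$ and add the two congruences to obtain $f_s+f_{s-m}\equiv 2m \pmod{4}$. Your write-up merely makes explicit the coefficientwise additivity $f_s=p_s+q_s$ that the paper leaves implicit.
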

\begin{proof}
Apply \cref{Gerald} to each of $g$ and $h$ to get the constant coefficients of $|g|^2$ and $|h|^2$, the parity of their coefficients, and the fact that their coefficients for $z^s$ vanish when $|s| \geq m$.
Apply \cref{Henriette} to each of $|g|^2$ and $|h|^2$ and sum the results to see that $f_s+f_{s-m} \equiv 2 m \pmod{4}$.
\end{proof}
\begin{lemma}\label{James}
Let $g(z)$ and $h(z)$ be Littlewood polynomials of lengths $m+1$ and $m$, respectively, and let $f(z)=\sum_{s \in \Z} f_s z^s = |g(z)|^2+|h(z)|^2$.  Then $f_s=0$ if $|s| \geq m+1$ and $f_0=2m+1$.  If $|s| \leq m$, then $f_s$ is odd.
\end{lemma}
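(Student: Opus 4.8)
The plan is to reduce everything to \cref{Gerald}, which already describes the coefficients of the squared modulus of a single Littlewood polynomial, and then simply add the contributions coming from $g$ and from $h$. Write $|g(z)|^2=\sum_{s\in\Z} p_s z^s$ and $|h(z)|^2=\sum_{s\in\Z} q_s z^s$, so that $f_s=p_s+q_s$ for every $s$, and extract everything from the two instances of \cref{Gerald} obtained by applying it to $g$ (a Littlewood polynomial of length $m+1$) and to $h$ (of length $m$).

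First I would settle the support and the constant coefficient. Applying \cref{Gerald} to $g$ gives $p_s=0$ for $|s|\geq m+1$ and $p_0=m+1$; applying it to $h$ gives $q_s=0$ for $|s|\geq m$ and $q_0=m$. Summing, $f_s=0$ for $|s|\geq m+1$, and $f_0=(m+1)+m=2m+1$, which also records that $f_0$ is odd.

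For the parity claim, \cref{Gerald} further gives $p_s\equiv s+(m+1)\pmod 2$ for $|s|\leq m+1$ and $q_s\equiv s+m\pmod 2$ for $|s|\leq m$. The one thing to watch is that these two congruences hold on ranges differing by one: $|h|^2$ already vanishes at $|s|=m$, exactly where $|g|^2$ is still nonzero. I would therefore split into two cases. For $|s|\leq m-1$, both congruences are in force, so $f_s\equiv p_s+q_s\equiv (s+m+1)+(s+m)=2s+2m+1\equiv 1\pmod 2$, whence $f_s$ is odd. For $|s|=m$, we have $q_s=0$, so $f_s=p_s\equiv m+(m+1)=2m+1\equiv 1\pmod 2$, again odd. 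Together these cover all $s$ with $|s|\leq m$.

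There is no genuine obstacle here; the argument is a bookkeeping exercise in summing the two applications of \cref{Gerald}. The only place one could slip is at the outermost shifts $s=\pm m$: because $|h|^2$ has already vanished there while $|g|^2$ has not, the parity at $|s|=m$ must be read off from $g$ alone rather than from the naive sum of the two congruences, and that is precisely why the uneven-length case yields odd coefficients throughout $|s|\leq m$ whereas the equal-length case of \cref{Imogene} yields even ones.
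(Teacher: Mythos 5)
Your proof is correct and follows essentially the same route as the paper: apply \cref{Gerald} to each of $g$ and $h$ and add the resulting information about constant coefficients, supports, and parities. The only difference is your case split at $|s|=m$, which is actually unnecessary, since \cref{Gerald}'s congruence $q_s\equiv s+m\pmod 2$ for $|h|^2$ remains valid (and consistent with $q_{\pm m}=0$) at $|s|=m$, so the naive sum of the two congruences already gives odd parity throughout $|s|\leq m$.
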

\begin{proof}
Apply \cref{Gerald} to each of $g$ and $h$ to get the constant coefficients of $|g|^2$ and $|h|^2$, the parity of their coefficients, and the fact that their coefficients for $z^s$ vanish when $|s| \geq m+1$.
\end{proof}
The reader should now recall \cref{Yolanda}, where almost-complementary pairs of three kinds are defined.
The conditions imposed upon such pairs $(g,h)$ in the definition are not explicit enough for us to determine the $L^2$ norm of $|g|^2+|h|^2$, but the following three results, Lemmas \ref{Nathan}--\ref{Sally}, show that these conditions actually imply precise results about the number and magnitudes of nonzero coefficients in $|g|^2+|h|^2$.
\begin{lemma}\label{Nathan}
Let $(g,h)$ be a pair of Littlewood polynomials, each of odd length $m >1$, and let $f(z)=\sum_{s \in \Z} f_s z^s=|g(z)|^2+|h(z)|^2$.  Then $(g,h)$ is an almost-complementary pair if and only if $\{|f_s|,|f_{s-m}|\}=\{0,2\}$ for every $s$ with $0 < s < m$.
\end{lemma}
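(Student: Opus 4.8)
The plan is to derive the characterization directly from the structural facts about $f(z)=|g(z)|^2+|h(z)|^2$ recorded in \cref{Imogene}, combined with the definition of near-complementarity in \cref{Yolanda}. Since $m$ is odd, the congruence in \cref{Imogene} reads $f_s+f_{s-m}\equiv 2\pmod 4$ for every $s$ with $0\le s\le m$, and \cref{Imogene} also tells us that every $f_s$ is even and that $f_s=0$ whenever $|s|\ge m$. These three facts---evenness, vanishing outside $|s|<m$, and the mod-$4$ pairing---are the whole engine of the argument.

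For the forward direction, I would assume $(g,h)$ is near-complementary, so that $|f_s|\le 2$ for every nonzero $s$. Fix $s$ with $0<s<m$; then both $s$ and $s-m$ are nonzero with absolute value less than $m$, so $f_s$ and $f_{s-m}$ are even integers lying in $\{-2,0,2\}$. A short case check against the constraint $f_s+f_{s-m}\equiv 2\pmod 4$ shows that exactly one of the two is $0$ and the other is $\pm 2$: if $f_s=0$ then $f_{s-m}\equiv 2\pmod 4$ forces $f_{s-m}=\pm 2$, while if $f_s=\pm 2$ then $f_{s-m}\equiv 0\pmod 4$ forces $f_{s-m}=0$. In every case $\{|f_s|,|f_{s-m}|\}=\{0,2\}$.

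For the converse, I would assume $\{|f_s|,|f_{s-m}|\}=\{0,2\}$ for all $0<s<m$ and verify each clause of \cref{Yolanda}(i). Nonzeroness of $g$ and $h$ is automatic from $m>1$, and both have odd length $m$ by hypothesis. The bound $|f_s|\le 2$ for all nonzero $s$ follows by sorting indices: for $0<s<m$ the hypothesis gives $|f_s|\in\{0,2\}$; for $-m<s<0$ I rewrite $s=(s+m)-m$ with $0<s+m<m$, so that $f_s$ is the ``$s-m$'' member of the pair indexed by $s+m$ and again lies in $\{0,2\}$; and for $|s|\ge m$ we have $f_s=0$ by \cref{Imogene}. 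Finally, taking $s=1$ (legitimate since $m>1$), the hypothesis forces one of $f_1,f_{1-m}$ to have magnitude $2$, so $f$ has a nonzero nonconstant coefficient and hence $(g,h)$ is not a Golay pair. Thus $(g,h)$ meets every requirement to be a near-complementary pair of odd length $m$.

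The argument is essentially a bookkeeping consequence of \cref{Imogene}, so I do not anticipate a genuine obstacle; the only point requiring care is keeping the index ranges straight---ensuring that for $0<s<m$ the companion index $s-m$ also lies strictly between $-m$ and $0$, so that both the evenness and the boundedness apply---and confirming that the three residue cases modulo $4$ are exhaustive and mutually consistent.
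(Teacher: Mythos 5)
Your proposal is correct and takes essentially the same approach as the paper's proof: both directions rest on the three facts from \cref{Imogene} (evenness of the coefficients, vanishing for $|s|\geq m$, and the pairing $f_s+f_{s-m}\equiv 2m \equiv 2 \pmod 4$) combined with \cref{Yolanda}. You are merely more explicit about the index bookkeeping for negative shifts and the exhaustiveness of the mod-$4$ case analysis, which the paper leaves implicit.
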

\begin{proof}
If $\{|f_s|,|f_{s-m}|\}=\{0,2\}$ for every $s$ with $0 < s < m$, then note that $|f_s|=0$ when $s \geq m$ by \cref{Imogene}, so $|f_s| \leq 2$ for all nonzero $s$.
Since $m > 1$ and \cref{Imogene} tells us that either $f_1$ or $f_{1-m}$ is nonzero, we see that $(g,h)$ is not a complementary pair, but is an almost-complementary pair.
Conversely, if $(g,h)$ is an almost-complementary pair, then \cref{Imogene} tells us that all the coefficients $f_s$ are even, and that $f_s+f_{s-m} \equiv 2 \pmod{4}$ when $0 < s < m$.
This means that one of $f_s$ and $f_{s-m}$ must be $0$ modulo $4$ and the other must be $2$ modulo $4$.
Since these numbers must have magnitude less than or equal to $2$, it means that one of them must be $0$ and the other must be $2$ or $-2$.
\end{proof}
\begin{lemma}\label{Oliver}
Let $(g,h)$ be a pair of Littlewood polynomials of length $m+1$ and $m>0$, respectively, and let $f(z)=\sum_{s \in \Z} f_s z^s=|g(z)|^2+|h(z)|^2$. 
Then $(g,h)$ is an almost-complementary pair if and only if $|f_s|=1$ for every $s$ with $0 < |s| \leq m$.
\end{lemma}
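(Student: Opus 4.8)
Lemma \ref{Oliver} asserts that for Littlewood polynomials $g$ and $h$ of uneven lengths $m+1$ and $m$, the pair $(g,h)$ is near-complementary precisely when $|f_s|=1$ for every $s$ with $0<|s|\leq m$, where $f=|g|^2+|h|^2$.

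The plan is to prove the two implications separately, mirroring the structure of the proof of \cref{Nathan} but using the parity information in \cref{James} rather than the modulo-$4$ congruence of \cref{Imogene}.

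First I would assume $(g,h)$ is a near-complementary pair and deduce the coefficient magnitudes.  By \cref{Yolanda}(ii), the defining condition is exactly that $|f_s|\leq 1$ for every nonzero $s$.  Now \cref{James} tells us that $f_s=0$ whenever $|s|\geq m+1$, and that $f_s$ is \emph{odd} for every $s$ with $|s|\leq m$.  An odd integer of magnitude at most $1$ must equal $\pm 1$, so $|f_s|=1$ for every $s$ with $0<|s|\leq m$, as desired; the range $s=0$ is excluded since $f_0=2m+1$, and the range $|s|>m$ is automatic since those coefficients vanish.

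For the converse, I would assume $|f_s|=1$ for every $s$ with $0<|s|\leq m$.  Then certainly $|f_s|\leq 1$ for all nonzero $s$, because \cref{James} guarantees $f_s=0$ when $|s|\geq m+1$.  This is precisely the defining condition in \cref{Yolanda}(ii), so $(g,h)$ would be a near-complementary pair provided we first check it is not a genuine Golay complementary pair (the definition in \cref{Yolanda} presupposes $(g,h)$ is not a Golay pair).  Since $m>0$, the coefficient $f_1$ (or more safely $f_m$, which by hypothesis has magnitude $1$) is nonzero, so $f$ is nonconstant and hence $(g,h)$ is not complementary.  Thus the proof is essentially immediate once \cref{James} is in hand; unlike the odd-length case, there is no obstacle, since the parity statement alone pins the small coefficients to $\pm 1$ without needing the finer modulo-$4$ analysis.
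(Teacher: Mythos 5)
Your proof is correct and follows essentially the same route as the paper's: both directions rest on \cref{James} (oddness of $f_s$ for $|s|\leq m$ and vanishing for $|s|\geq m+1$), combined with the observation that $m>0$ forces a nonzero nonconstant coefficient, so the pair cannot be Golay complementary. The only difference is cosmetic ordering of the two implications.
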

\begin{proof}
If $|f_s|=1$ for every $s$ with $0 < |s| \leq m$, then note that $|f_s|=0$ when $s \geq m+1$ by \cref{James}, so $|f_s| \leq 1$ for all nonzero $s$.  Since $m > 0$, and we know that $f_1$ is nonzero, we see that $(g,h)$ is not a complementary pair, but is an almost-complementary pair.
Conversely, if $(g,h)$ is an almost-complementary pair, then \cref{James} tells us that $f_s$ is odd when $0 < |s| \leq m$.
Since these numbers must have magnitude less than or equal to $1$, it means that they must have magnitude exactly $1$.
\end{proof}
\begin{lemma}\label{Sally}
Let $(g,h)$ be a pair of Littlewood polynomials, each of even length $m >2$, and let $f(z)=\sum_{j \in \Z} f_j z^j=|g(z)|^2+|h(z)|^2$.  Then $(g,h)$ is an almost-complementary pair if and only if $f_{m/2}=f_{-m/2} \in \{2,-2\}$ and $f_s=0$ for all $s\not\in\{0,m/2,-m/2\}$.
\end{lemma}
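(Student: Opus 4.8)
The plan is to treat the lemma as an unwinding of the even-length clause of \cref{Yolanda}, combined with the three structural facts about $f=|g|^2+|h|^2$ supplied by \cref{Imogene}: that every $f_s$ is even, that $f_s=0$ whenever $|s|\geq m$, and, crucially, that $f_s+f_{s-m}\equiv 2m\pmod 4$ for $0\leq s\leq m$. Since $m$ is even, $2m\equiv 0\pmod 4$, so this last congruence reads $f_s+f_{s-m}\equiv 0\pmod 4$. I would also record at the outset the reflection symmetry $f_s=f_{-s}$, which holds because $g$ and $h$ have real coefficients, so that $f_s=C_{g,g}(s)+C_{h,h}(s)$ is a real integer equal to its reflection. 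With these facts in hand the two implications are short.

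For the direction that assumes near-complementarity, I would first argue that $f$ has exactly two nonconstant monomials. By \cref{Yolanda} it has at most two, and by the symmetry $f_s=f_{-s}$ any nonzero $f_s$ with $s\neq 0$ is matched by an equal nonzero $f_{-s}$; hence the number of nonconstant monomials is even, so it is $0$ or $2$. It cannot be $0$, for that would make $f$ constant and $(g,h)$ a Golay pair, which \cref{Yolanda} excludes. So there is a single $s_0$ with $0<s_0<m$ (the upper bound coming from \cref{Imogene}) at which $f_{s_0}=f_{-s_0}\neq 0$, while $f_s=0$ for every other nonzero $s$. It then remains to force $s_0=m/2$ and $|f_{s_0}|=2$. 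Applying $f_s+f_{s-m}\equiv 0\pmod 4$ at $s=s_0$ and using $f_{s_0-m}=f_{m-s_0}$ (symmetry), I would split on whether $s_0=m/2$: if $s_0\neq m/2$, then $m-s_0$ is a nonzero index lying outside $\{s_0,-s_0\}$, so $f_{m-s_0}=0$ and the congruence yields $f_{s_0}\equiv 0\pmod 4$, contradicting $0<|f_{s_0}|\leq 2$ with $f_{s_0}$ even. Hence $s_0=m/2$, and then $f_{m/2}$ is a nonzero even integer of magnitude at most $2$, so $f_{m/2}\in\{2,-2\}$, with $f_{-m/2}=f_{m/2}$ by symmetry, which is exactly the asserted structure.

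The converse is essentially a restatement: if $f_{m/2}=f_{-m/2}\in\{2,-2\}$ and $f_s=0$ for all $s\notin\{0,m/2,-m/2\}$, then $f$ has precisely the two nonconstant monomials $z^{m/2}$ and $z^{-m/2}$, each with coefficient of magnitude $2\leq 2$, and $f$ is nonconstant so $(g,h)$ is not a Golay pair; since $g$ and $h$ are nonzero Littlewood polynomials of common even length, the even-length clause of \cref{Yolanda} applies verbatim and $(g,h)$ is near-complementary. The only genuine work, and thus the main obstacle, is the forward direction's localization of the nonconstant coefficients at $\pm m/2$: this is where the modulo-$4$ congruence from \cref{Imogene} is indispensable, since it is precisely the distinction between $0$ and $2$ modulo $4$, after combining an index with its $m$-shift and then its reflection, that rules out a surviving nonzero coefficient anywhere except at the self-paired position $m/2=m-m/2$.
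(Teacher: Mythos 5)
Your proof is correct and follows essentially the same route as the paper's: both rely on \cref{Imogene} (even coefficients, support in $|s|<m$, and the congruence $f_s+f_{s-m}\equiv 2m\equiv 0\pmod 4$), the symmetry $f_{-s}=f_s$, and the at-most-two-monomials clause of \cref{Yolanda} to force the nonzero index to satisfy $s_0-m=-s_0$. The only difference is cosmetic: you localize $s_0=m/2$ by contradiction (if $s_0\neq m/2$ then $f_{m-s_0}=0$ forces $f_{s_0}\equiv 0\pmod 4$), whereas the paper argues directly that $f_{s_0-m}$ is nonzero and hence must coincide with $f_{-s_0}$.
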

\begin{proof}
The last set of conditions given is certainly sufficient to make $(g,h)$ almost-complementary, so let us prove they are necessary.
Assume that $(g,h)$ is almost-complementary, so $0 < |f_t| \leq 2$ for some nonzero $t$.
Note that $f_{-t}=f_t$ because $\conj{f(z)}=f(z)$, so we may take $t$ to be positive, and by \cref{Imogene}, we must have $t < m$.
\cref{Imogene} also tells us the coefficients of $f$ are even, and so $f_{-t}=f_t \in \{2,-2\}$.
Furthermore, \cref{Imogene} states that $f_t+f_{t-m} \equiv 0 \pmod{4}$, so $f_{t-m}$ must be nonzero, and since $0 < t < m$, we see that $t-m\not=0$.
But $f_t$ and $f_{-t}$ are the only nonzero coefficients of $f(z)$ other than $f_0$, so $t-m$ must be equal to $-t$, and so $t=m/2$.
\end{proof}
In Propositions \ref{Lawrence}--\ref{Katherine} below, we deduce the $L^2$ norm of $|g|^2+|h|^2$ for any almost-complementary pair from the characterizations in Lemmas \ref{Nathan}--\ref{Sally}.
\begin{proposition}\label{Lawrence}
Let $g$ and $h$ be Littlewood polynomials of odd length $m>1$.
Then $(g,h)$ is not a Golay complementary pair, and
\[
\normtt{\,|g(z)|^2+|h(z)|^2} \geq (2m)^2+4(m-1),
\]
with equality if and only if $(g,h)$ is an almost-complementary pair.
\end{proposition}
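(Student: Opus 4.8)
The plan is to exploit the fact that $\normtt{f}$ is simply the sum of the squared magnitudes of the coefficients of $f(z)=\sum_{s\in\Z} f_s z^s = |g(z)|^2+|h(z)|^2$, and then to extract from \cref{Imogene} enough structural information about those coefficients to bound this sum from below. First I would record the three relevant facts supplied by \cref{Imogene}: every $f_s$ is an even integer, $f_0=2m$, $f_s=0$ whenever $|s|\geq m$, and $f_s+f_{s-m}\equiv 2m\pmod{4}$ for $0\leq s\leq m$. Because $m$ is odd we have $2m\equiv 2\pmod 4$, so this congruence sharpens to $f_s+f_{s-m}\equiv 2\pmod 4$. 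Since the coefficients are real integers, $|f_s|^2=f_s^2$, and the only indices where $f_s$ can be nonzero, apart from $f_0$, lie in $\{-(m-1),\ldots,-1,1,\ldots,m-1\}$.

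The key observation is that the $m-1$ two-element sets $\{s,\,s-m\}$ for $s=1,\ldots,m-1$ partition these $2(m-1)$ nonzero indices exactly: as $s$ runs through $1,\ldots,m-1$ the first entry covers the positive indices and the second entry $s-m$ covers the negative indices $-(m-1),\ldots,-1$, with no repetition. I would therefore write
\[
\normtt{f}=f_0^2+\sum_{s=1}^{m-1}\bigl(f_s^2+f_{s-m}^2\bigr).
\]
For each $s$ with $0<s<m$, the coefficients $f_s$ and $f_{s-m}$ are even and satisfy $f_s+f_{s-m}\equiv 2\pmod 4$, which forces exactly one of them to be $\equiv 0\pmod 4$ and the other to be $\equiv 2\pmod 4$. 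Consequently one of the two has magnitude at least $2$ and the other magnitude at least $0$, giving $f_s^2+f_{s-m}^2\geq 4$. Summing over the $m-1$ pairs and using $f_0^2=(2m)^2$ yields $\normtt{f}\geq (2m)^2+4(m-1)$, which is the claimed bound.

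For the equality statement I would observe that the pairwise estimate $f_s^2+f_{s-m}^2\geq 4$ is an equality precisely when $\{|f_s|,|f_{s-m}|\}=\{0,2\}$, and hence equality in the full bound holds exactly when this occurs for every $s$ with $0<s<m$. By \cref{Nathan} this is precisely the characterization of a near-complementary pair of odd length $m$, so equality holds if and only if $(g,h)$ is near-complementary. I do not anticipate a genuine obstacle: the substantive content has been front-loaded into \cref{Imogene} and \cref{Nathan}, and what remains is the bookkeeping of the partition together with an elementary per-pair minimization. The only points demanding care are verifying that the sets $\{s,\,s-m\}$ genuinely exhaust the nonzero indices without overlap, and tracking how the oddness of $m$ feeds in through $2m\equiv 2\pmod 4$ to guarantee that each pair contributes a strictly positive amount—this is exactly what distinguishes the present case from the even-length situation, where complementarity (and hence a vanishing contribution) remains possible.
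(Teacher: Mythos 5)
Your proposal is correct and follows essentially the same route as the paper: decompose $\normtt{f}$ as $|f_0|^2$ plus the sum over the pairs $\{s,s-m\}$ for $0<s<m$, use \cref{Imogene}'s parity and mod-$4$ congruence (sharpened by the oddness of $m$) to get $|f_s|^2+|f_{s-m}|^2\geq 4$ per pair, and invoke \cref{Nathan} for the equality characterization. Your explicit verification of the index partition and of $2m\equiv 2\pmod 4$ simply spells out steps the paper leaves implicit.
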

\begin{proof}
Let $f(z)=\sum_{s \in \Z} f_s z^s=|g(z)|^2+|h(z)|^2$.
\cref{Imogene} tells us that $f_s=0$ when $|s| \geq m$, that $f_0=2m$, that all the coefficients $f_s$ are even, and that $f_s+f_{s-m} \equiv 2 \pmod{4}$ when $0 \leq s \leq m$.
So when $0 < s < m$, this means that one of $f_s$ and $f_{s-m}$ must be $0$ modulo $4$ and the other must be $2$ modulo $4$.
Besides proving that $(g,h)$ is not a Golay pair, this means that $|f_s|^2+|f_{s-m}|^2 \geq 0^2+2^2=4$ for $0 < s < m$, and so
\begin{align*}
\normtt{f}
& = \sum_{s \in \Z} |f_s|^2 \\
& = |f_0|^2 + \sum_{s=1}^{m-1} \left(|f_s|^2 + |f_{s-m}|^2\right) \\
& \geq (2m)^2 + 4(m-1),
\end{align*}
with equality if and only if $\{|f_s|,|f_{s-m}|\}$ is equal to $\{0,2\}$ whenever $0 < s < m$.
By \cref{Nathan}, this last condition holds if and only if $(g,h)$ is an almost-complementary pair.
\end{proof}
\begin{proposition}\label{Matilda}
Let $g$ and $h$ be Littlewood polynomials of lengths $m+1$ and $m$, respectively, with $m > 0$.
Then $(g,h)$ is not a Golay complementary pair, and
\[
\normtt{\,|g(z)|^2+|h(z)|^2} \geq (2m+1)^2+2 m,
\]
with equality if and only if $(g,h)$ is an almost-complementary pair.
\end{proposition}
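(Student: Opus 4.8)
The plan is to follow the template established in the proof of \cref{Lawrence}, now using the uneven-length structural results \cref{James} and \cref{Oliver} in place of \cref{Imogene} and \cref{Nathan}. First I would set $f(z)=\sum_{s\in\Z} f_s z^s = |g(z)|^2+|h(z)|^2$ and appeal to \cref{James} for the three facts I need: the support of $f$ is confined to $|s|\le m$, the constant coefficient is $f_0=2m+1$, and every coefficient $f_s$ with $|s|\le m$ is an odd integer.

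Next I would expand the $L^2$ norm as the sum of squared magnitudes of the coefficients of $f$, splitting off the constant term:
\[
\normtt{f} = \sum_{s\in\Z} |f_s|^2 = |f_0|^2 + \sum_{0<|s|\le m} |f_s|^2.
\]
Here $|f_0|^2=(2m+1)^2$. Because each $f_s$ with $0<|s|\le m$ is a nonzero (odd) integer, we have $|f_s|^2\ge 1$, and there are exactly $2m$ such shifts, namely $s\in\{\pm1,\pm2,\ldots,\pm m\}$. Summing these lower bounds yields $\normtt{f}\ge (2m+1)^2+2m$, which is the claimed inequality.

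Finally, equality holds in the bound precisely when $|f_s|^2=1$, i.e.\ $|f_s|=1$, for every $s$ with $0<|s|\le m$. By \cref{Oliver}, this condition is exactly the statement that $(g,h)$ is a near-complementary pair, which completes the equivalence.

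The key conceptual point, and the only place where the argument genuinely differs from the odd-length case, is that in the uneven-length setting the parity information alone suffices. In \cref{Lawrence} one needed the mod-$4$ pairing $f_s+f_{s-m}\equiv 2\pmod 4$ to force one coefficient of each paired shift to be nonzero; here \cref{James} already guarantees that every single off-center coefficient is odd and hence nonzero, so no pairing is required and each of the $2m$ off-center shifts contributes at least $1$ independently. There is thus no real obstacle beyond correctly counting these $2m$ shifts and invoking \cref{Oliver} for the equality clause.
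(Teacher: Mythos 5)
Your proposal is correct and follows essentially the same argument as the paper's own proof: both invoke \cref{James} to get that every off-center coefficient in the range $0<|s|\le m$ is odd (hence of magnitude at least $1$), sum the $2m$ resulting contributions on top of $|f_0|^2=(2m+1)^2$, and then settle the equality case via \cref{Oliver}. The only cosmetic difference is that the paper groups the off-center terms in pairs $|f_{-s}|^2+|f_s|^2$ for $s=1,\ldots,m$, while you sum over $0<|s|\le m$ directly.
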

\begin{proof}
Let $f(z)=\sum_{s \in \Z} f_s z^s=|g(z)|^2+|h(z)|^2$.
\cref{James} tells us that $f_s=0$ when $|s| \geq m+1$, that $f_0=2m+1$, and that $f_s$ is odd when $|s|\leq m$.
This means that $|f_s| \geq 1$ when $|s| \leq m$, so that $(g,h)$ is not a Golay pair, and
\begin{align*}
\normtt{f}
& = \sum_{s \in \Z} |f_s|^2 \\
& = |f_0|^2 + \sum_{s=1}^{m} \left(|f_{-s}|^2 + |f_s|^2\right) \\
& \geq (2m+1)^2 + 2 m,
\end{align*}
with equality if and only if $|f_s|=1$ whenever $0 < |s| \leq m$.
By \cref{Oliver}, this last condition holds if and only if $(g,h)$ is an almost-complementary pair.
\end{proof}
\begin{proposition}\label{Katherine}
Let $g$ and $h$ be Littlewood polynomials of even length $m$.
Then
\[
\normtt{\,|g(z)|^2+|h(z)|^2} \geq (2m)^2
\]
with equality if and only if $(g,h)$ is Golay complementary pair.
If $(g,h)$ is not a Golay complementary pair, then
\[
\normtt{\,|g(z)|^2+|h(z)|^2} \geq (2m)^2+8,
\]
with equality if and only if $(g,h)$ is an almost-complementary pair.
\end{proposition}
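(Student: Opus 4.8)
The plan is to treat the two inequalities separately, obtaining the first from \cref{Cecilia} and the second from the mod-$4$ structure of $f(z)=|g(z)|^2+|h(z)|^2$ supplied by \cref{Imogene}. Write $f(z)=\sum_{s\in\Z} f_s z^s$. For the first inequality I would simply specialize \cref{Cecilia}: since $g$ and $h$ are Littlewood polynomials of length $m$, we have $\normtt{g}=\normtt{h}=m$, so $\normtt{f}\ge(\normtt{g}+\normtt{h})^2=(2m)^2$, with equality exactly when $(g,h)$ is a Golay complementary pair. That disposes of the first claim.

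For the second claim, assume $(g,h)$ is not a Golay pair. First I record the structural facts from \cref{Imogene}: each $f_s$ is even, $f_0=2m$, $f_s=0$ for $|s|\ge m$, and (since $m$ is even) $f_s+f_{s-m}\equiv 0\pmod 4$ for $0\le s\le m$; moreover $f_{-s}=f_s$ because $\conj{f(z)}=f(z)$ and the coefficients are real. Hence
\[
\normtt{f}=\sum_{s\in\Z} f_s^2=(2m)^2+2\sum_{s=1}^{m-1} f_s^2.
\]
Because $(g,h)$ is not Golay, some $f_{s}$ with $0<s<m$ is nonzero, and being a nonzero even integer it satisfies $f_s^2\ge 4$; this already yields $\normtt{f}\ge(2m)^2+8$.

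The delicate part is the equality characterization, and this is where the mod-$4$ relation does the real work. Equality forces $\sum_{s=1}^{m-1} f_s^2=4$, so exactly one index $s_0\in\{1,\dots,m-1\}$ has $f_{s_0}\in\{2,-2\}$ while all other $f_s$ with $0<s<m$ vanish. If $s_0\ne m/2$, then $m-s_0$ is a distinct index in $\{1,\dots,m-1\}$, and the congruence $f_{s_0}+f_{s_0-m}\equiv 0\pmod 4$ together with $f_{s_0-m}=f_{m-s_0}$ forces $f_{m-s_0}\equiv 2\pmod 4$, hence nonzero, contradicting uniqueness. Therefore $s_0=m/2$, so equality holds precisely when $f_{m/2}=f_{-m/2}\in\{2,-2\}$ and $f_s=0$ for $s\notin\{0,m/2,-m/2\}$. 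By \cref{Sally} this condition is equivalent to $(g,h)$ being a near-complementary pair for $m>2$; when $m=2$ the same coefficient pattern is near-complementary directly by \cref{Yolanda}, so the statement holds for every even $m$.

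I expect the only genuine obstacle to be the equality step: one must use the full mod-$4$ pairing of \cref{Imogene}, not merely the parity of the coefficients, to rule out an isolated nonzero coefficient away from $s=m/2$ and thereby pin the minimizing configuration down to exactly the near-complementary pattern of \cref{Sally}. The remainder is bookkeeping that rides on lemmas already established.
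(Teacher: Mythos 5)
Your proof is correct and follows essentially the same route as the paper's: both rest on the evenness of the coefficients and the mod-$4$ pairing $f_s+f_{s-m}\equiv 0 \pmod 4$ from \cref{Imogene} to show a nonzero coefficient forces $\normtt{f}\geq(2m)^2+8$, then use that same congruence to pin the equality configuration at $s=\pm m/2$ and invoke \cref{Sally}. The only differences are cosmetic: you obtain the first inequality by specializing \cref{Cecilia} where the paper argues directly from constancy of $f$, and you are in fact slightly more careful than the paper in noting that \cref{Sally} assumes $m>2$ and checking the length-$2$ case against \cref{Yolanda} separately.
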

\begin{proof}
Let $f(z)=\sum_{s \in \Z} f_s z^s=|g(z)|^2+|h(z)|^2$, where we know that $f_0=2m$ by \cref{Imogene}.
If $(g,h)$ is a complementary pair, then $f(z)$ is just the constant $f_0=2m$, and so $\normtt{f}=(2m)^2$.

So henceforth let us suppose that $(g,h)$ is not a complementary pair, so $f_t\not=0$ for some nonzero $t$.
Since $\conj{f(z)}=f(z)$, we have $f_{-t}=f_t$, so we may assume that $t$ is positive, and by \cref{Imogene} we must have $t < m$.
\cref{Imogene} also says that the coefficients of $f$ are even, so $|f_t| \geq 2$.
Thus
\begin{align*}
\normtt{f}
& = \sum_{s \in \Z} |f_s|^2 \\
& \geq |f_0|^2 + |f_t|^2 + |f_{-t}|^2 \\
& \geq (2m)^2 + 2^2 + 2^2.
\end{align*}
Furthermore, $\normtt{f}=(2m)^2+8$ if and only if we have both (i) $f_t=f_{-t} \in \{2,-2\}$ and (ii) $f_s=0$ for all $s \not\in\{0,t,-t\}$.
In this case, since \cref{Imogene} makes $f_t+f_{t-m} \equiv 0 \pmod{4}$, we see that $f_{t-m}\not=0$, and since $t < m$, conditions (i) and (ii) imply that $t-m=-t$, which implies that $t=m/2$.
So $\normtt{f}=(2 m)^2+8$ if and only if $f_{m/2}=f_{-m/2} \in \{2,-2\}$ and $f_s=0$ for all $s\not \in \{0,m/2,-m/2\}$.
By \cref{Sally}, these last conditions hold if and only if $(g,h)$ is an almost-complementary pair.
\end{proof}
Now we are ready to prove \cref{Thomas} using these three propositions.
\begin{proof}[Proof of \cref{Thomas}]
In part \eqref{Abigail}, $g$ will be a nonzero constant and $h=0$, so $(g,h)$ is a trivial Golay complementary pair.
Then parts \eqref{Abigail} and \eqref{Boris} follow from \cref{Edward}.

If $\ell \equiv 0 \pmod{4}$ and there is no Golay pair of length $\ell/2$, then  since $g$ and $h$ are Littlewood polynomials of even length $m=\ell/2$, we have $\normtt{g}+\normtt{h}=\ell$, and \cref{Katherine} tells us that
\[
\normtt{\,|g(z)|^2+|h(z)|^2} \geq \ell^2+8,
\]
with equality if and only if $(g,h)$ is an almost-complementary pair.
So
\[
\frac{\normtt{\,|g|^2+|h|^2}}{\left(\normtt{f}+\normtt{g}\right)^2} \geq 1+\frac{8}{\ell^2},
\]
with equality if and only if $(g,h)$ is an almost-complementary pair.
Then we obtain part \eqref{Claudia} by applying \cref{Dennis}.

If $\ell$ is odd and greater than $1$, then $g$ and $h$ are nonzero Littlewood polynomials of lengths $m+1=(\ell+1)/2$ and $m=(\ell-1)/2$, respectively.
We have $\normtt{g}+\normtt{h}=\ell$, and \cref{Matilda} tells us that $(g,h)$ is not a Golay pair and
\[
\normtt{\,|g(z)|^2+|h(z)|^2} \geq \ell^2+\ell-1,
\]
with equality if and only if $(g,h)$ is an almost-complementary pair.
So
\[
\frac{\normtt{\,|g|^2+|h|^2}}{\left(\normtt{f}+\normtt{g}\right)^2} \geq 1+\frac{\ell-1}{\ell^2},
\]
with equality if and only if $(g,h)$ is an almost-complementary pair.
Then we obtain part \eqref{Deidre} by applying \cref{Dennis}.

If $\ell \equiv 2 \pmod{4}$ and $\ell > 2$, then $g$ and $h$ are Littlewood polynomials of odd length $m=\ell/2$, which is greater than $1$.
We have $\normtt{g}+\normtt{h}=\ell$, and \cref{Lawrence} tells us that $(g,h)$ is not a Golay pair and
\[
\normtt{\,|g(z)|^2+|h(z)|^2} \geq \ell^2+2(\ell-2),
\]
with equality if and only if $(g,h)$ is an almost-complementary pair.
So
\[
\frac{\normtt{\,|g|^2+|h|^2}}{\left(\normtt{f}+\normtt{g}\right)^2} \geq 1+\frac{2(\ell-2)}{\ell^2},
\]
with equality if and only if $(g,h)$ is an almost-complementary pair.
Then we obtain part \eqref{Ellen} by applying \cref{Dennis}.
\end{proof}

\section{Existence of complementary and almost-complementary pairs}\label{Xavier}

In the discussion following \cref{Thomas}, we noted that the asymptotic autocorrelation demerit factors listed in \cref{Peter} and plotted in \cref{Raphael} meet the bounds in \cref{Thomas} for all lengths $\ell \leq 52$ except for $28$, $36$, and $44$.
The data attest to the well-known fact that there exist binary Golay complementary pairs of lengths $1$, $2$, $4$, $8$, $10$, $16$, $20$, and $26$ (in addition to trivial Golay pairs).
The data also imply the existence of almost-complementary pairs of odd length $m$ for $3 \leq m \leq 25$, almost-complementary pairs of unequal lengths $m$ and $m+1$ for $1 \leq m \leq 25$, and almost-complementary pairs of even length $m$ for $m=6$, $12$, and $24$ (but not $m=14$, nor $18$, nor $22$).
Overall, this shows that almost-complementary pairs very often exist for polynomials of short length, with lacunae at certain even lengths that are not surprising given how much more stringent the conditions are for almost-complementary pairs of even length (cf.~\cref{Yolanda}).

This prompts the following questions about the existence of complementary and almost-complementary pairs.
The first question, about the existence of binary Golay pairs, is well-known and has been extensively studied, but remains open.
\begin{question}
For which $m$ does there exist a binary Golay complementary pair of length $m$?
\end{question}
So far the only $m$ such that there are known to exist binary Golay pairs of length $m$ are those given by \cref{David}, that is, $m$ of the form $2^a 10^b 26^c$ for $a,b,c \geq 0$.
Some necessary conditions on the length $m$ of a binary Golay pair have been found.
It is well known that if $m > 1$, then $m$ must be even (see the paper \cite[p.~84]{Golay-61} of Golay, who does not even consider pairs of length $1$ to be complementary).
In \cite[\S 3]{Eliahou-Kervaire-Saffari-90} and \cite[Lemma 1.5]{Eliahou-Kervaire-Saffari-91}, Eliahou, Kervaire, and Saffari showed that $m$ cannot be the length of a binary Golay pair if $m$ is divisible by a prime $p$ with $p\equiv 3 \pmod{4}$.
So the prime factors of $m$ can only be $2$ and primes $p$ with $p\equiv 1 \pmod{4}$.
Since each of these primes can be factored in the ring $\Z[i]$ of Gaussian integers as $(a+b i)(a-b i)$ for some $a,b \in \Z$, their product $m$ can also be factored in this way, and so $m$ must be expressible as the sum of two squares (one of which may be $0$).
The fact that $m$ is a sum of two squares was already known to Golay \cite[p.~84]{Golay-61}.
As mentioned in the Introduction, Borwein and Ferguson \cite{Borwein-Ferguson} conducted a computer search that showed that the only $m < 100$ for which binary Golay pairs of length $m$ exist are those already given by \cref{David}.

One may also ask the question about existence of almost-complementary pairs of each type enumerated in \cref{Yolanda}.
Every pair of Littlewood sequences of length $1$ is a Golay pair, so there cannot be almost-complementary pairs of length $1$, but one may ask about all other odd lengths.
\begin{question}
For which odd $m$ with $m > 1$ does there exist an almost-complementary pair of length $m$?
\end{question}
As mentioned above, our data answer this question affirmatively for all odd $m$ with $3 \leq m \leq 25$.
Furthermore, the following result indicates that almost-complementary pairs of odd length $m > 1$ will exist if $m$ is one less or one greater than the length of a Golay pair.
\begin{proposition}\label{Hubert}
Let $(g,h)$ be a binary Golay pair of length $m$, with $g(z)=\sum_{j \in \Z} g_j z^j$ and $h(z)=\sum_{j \in \Z} h_j z^j$.
\begin{enumerate}[(i)]
\item If $m > 2$, then the pair $(g(z)-g_{m-1} z^{m-1},h(z)-h_{m-1} z^{m-1})$ is an almost-complementary pair of odd length $m-1$.
\item If $m \geq 2$ and $u,v \in \{1,-1\}$, then $(g(z)+u z^m, h(z)+v z^m)$ is an almost-complementary pair of odd length $m+1$.
\end{enumerate}
\end{proposition}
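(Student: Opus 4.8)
\emph{Proof proposal.} The plan is to handle both parts uniformly by checking the two requirements of \cref{Yolanda}(i): that the modified pair consists of two nonzero Littlewood polynomials of the claimed odd length, that the Laurent polynomial $f=|g_{\text{new}}|^2+|h_{\text{new}}|^2$ has every nonzero coefficient of magnitude at most $2$, and that the pair is not a Golay pair. First I would record that since $(g,h)$ is a nontrivial binary Golay pair with $m>1$, its length $m$ is even (the known fact recalled above), so $m-1$ and $m+1$ are both odd, matching the odd lengths asserted in the two parts.

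For part (i), set $g'(z)=g(z)-g_{m-1}z^{m-1}$ and $h'(z)=h(z)-h_{m-1}z^{m-1}$; each retains its nonzero constant coefficient and now has top coefficient $g_{m-2},h_{m-2}\in\{1,-1\}$, so each is a nonzero Littlewood polynomial of length exactly $m-1$. The central observation is that deleting the top coefficient perturbs each autocorrelation at a positive shift by exactly one product term: for $0<s\le m-1$ one has $C_{g',g'}(s)=C_{g,g}(s)-g_{m-1}\conj{g_{m-1-s}}$, and similarly for $h'$. Summing these and invoking the Golay hypothesis $C_{g,g}(s)+C_{h,h}(s)=0$ for $s\neq 0$ shows, via \eqref{Bernard}, that the coefficient $f'_s$ of $z^s$ in $|g'|^2+|h'|^2$ equals $-\bigl(g_{m-1}\conj{g_{m-1-s}}+h_{m-1}\conj{h_{m-1-s}}\bigr)$ for $0<s\le m-1$; as all four factors lie in $\{1,-1\}$, this has magnitude at most $2$. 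Using $f'_{-s}=\conj{f'_s}$ together with the vanishing of $f'_s$ for $|s|\ge m-1$ from \cref{Imogene}, the bound $|f'_s|\le 2$ then holds at every nonzero shift.

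For part (ii), set $g''(z)=g(z)+uz^m$ and $h''(z)=h(z)+vz^m$ with $u,v\in\{1,-1\}$, which are nonzero Littlewood polynomials of length $m+1$. The analogous identity is $C_{g'',g''}(s)=C_{g,g}(s)+u\conj{g_{m-s}}$ for $0<s\le m$, and the same for $h''$ with $v$. Summing and applying the Golay hypothesis gives $f''_s=u\conj{g_{m-s}}+v\conj{h_{m-s}}$ for $0<s\le m$, again of magnitude at most $2$; conjugate symmetry and the support bound from \cref{Imogene} extend $|f''_s|\le 2$ to all nonzero $s$.

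What remains is to confirm that neither modified pair is Golay, and this is where I expect the only real subtlety, since I do not want to hunt for an explicit nonzero coefficient by hand. Instead I would use the congruence in \cref{Imogene}: applied to the length-$(m-1)$ pair $(g',h')$ it gives $f'_s+f'_{s-(m-1)}\equiv 2(m-1)\equiv 2\pmod 4$ for $0<s<m-1$ (using that $m-1$ is odd), and applied to the length-$(m+1)$ pair it gives $f''_s+f''_{s-(m+1)}\equiv 2\pmod 4$ for $0<s<m+1$. Because $m>2$ (respectively $m\ge 2$) guarantees an admissible $s$ whose partner index $s-(m-1)$ (respectively $s-(m+1)$) is nonzero, a constant $f$ would force both summands to vanish, contradicting the nonzero residue. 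Hence $f'$ and $f''$ are nonconstant, the pairs are not Golay, and \cref{Yolanda}(i) yields the result. The main labor is thus purely the bookkeeping of index ranges in the two correlation identities and the verification that the supports dictated by \cref{Imogene} are consistent with them.
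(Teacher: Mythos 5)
Your proof is correct, and its computational core coincides with the paper's, though it is packaged differently. The paper routes both parts through a single technical lemma (\cref{Brian}): writing each modified pair as $(g+a,h+b)$ with $a,b$ monomials of degree at least $\deg g$, it expands $|g+a|^2+|h+b|^2$ into a constant plus the cross terms $(a\conj{g}+b\conj{h})+(\conj{a}g+\conj{b}h)$, whose coefficients have magnitude at most $2$. Your shift-by-shift identities $C_{g',g'}(s)=C_{g,g}(s)-g_{m-1}\conj{g_{m-1-s}}$ and $C_{g'',g''}(s)=C_{g,g}(s)+u\conj{g_{m-s}}$ are exactly those cross terms read off coefficientwise via \eqref{Bernard}, so the bound $|f_s|\le 2$ is obtained by the same algebra; the paper's abstraction is more general (the same lemma with $c=1$ also proves \cref{Margaret}), whereas your inline computation serves only this proposition. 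Where you genuinely diverge is the ``not Golay'' step: the paper simply cites the classical theorem that nontrivial binary Golay pairs have even length, so the constructed pairs of odd length greater than $1$ cannot be Golay, while you re-derive nonconstancy of $f$ internally from the mod-$4$ congruence of \cref{Imogene}. This makes that step rest on the paper's own machinery rather than on an external citation, which is a nice self-contained touch; note, however, that it does not eliminate the external input entirely, since you still invoke the evenness theorem to know $m$ is even --- both to get the odd target lengths and because your congruence $2(m\pm 1)\equiv 2 \pmod{4}$ requires $m\pm 1$ odd --- exactly as the paper does implicitly.
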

\begin{proof}
Define $f(z)=|g(z)-g_{m-1} z^{m-1}|^2+|h(z)-h_{m-1} z^{m-1}|^2$ or $|g(z)+u z^m|^2+|h(z)+v z^m|^2$, depending on which case we want to prove.
By \cref{Brian} (whose statement and proof are delayed to the end of this section), we see that all nonconstant terms of $f(z)$ have coefficients of magnitude less than or equal to $2$.
Furthermore, since Golay pairs of length greater than one always have even length, the pairs we constructed from $g$ and $h$ have odd lengths greater than $1$, so they cannot be Golay pairs, and thus they are almost-complementary pairs.
\end{proof}
This proposition, along with \cref{David}, implies the existence of almost-complementary pairs of odd lengths $2^a 10^b 26^c \pm 1$ for every $a,b,c \geq 0$ with $a+b+c \geq 1$ (except not length $1$, since all pairs of length $1$ are Golay pairs).
This accounts for our observation that there exist almost-complementary pairs of lengths $3$, $5$, $7$, $9$, $11$, $15$, $17$, $19$, $21$, and $25$.
The fact that almost-complementary pairs of lengths $13$ and $23$ also exist is not accounted for by \cref{Hubert}.
We do not know if there is any odd $m > 1$ such that an almost-complementary pair of length $m$ does not exist.
Almost-complementary pairs that meet additional constraints and that have odd lengths of the form $2^a \pm 1$ and $2^a 10^b 26^c +1$ with $a > 0$ and $b,c \geq 0$ are constructed in \cite{Liu-Parampalli-Guan} and \cite{Adhikary-Majhi-Liu-Guan}.

Similarly, we may ask about existence of almost-complementary pairs of uneven length.
\begin{question}
For which $m$ with $m > 0$ does there exist an almost-complementary pair of uneven lengths $m$ and $m+1$?
\end{question}
As mentioned above, our data answer this question affirmatively for all $m$ with $1 \leq m \leq 25$.
Furthermore, the following result indicates that there will always exist an almost-complementary pair of uneven lengths when one of the two lengths is that of a Golay pair.
\begin{proposition}\label{Margaret}
Let $(g,h)$ be a binary Golay pair of length $m$, with $g(z)=\sum_{j \in \Z} g_j z^j$.
\begin{enumerate}[(i)]
\item If $m > 1$, then $(g(z)-g_{m-1} z^{m-1},h(z))$ is an almost-complementary pair of uneven lengths $m-1$ and $m$.
\item If $m > 0$ and $u \in \{1,-1\}$, then the pair $(g(z)+u z^m, h(z))$ is an almost-complementary pair of uneven lengths $m$ and $m+1$.
\end{enumerate}
\end{proposition}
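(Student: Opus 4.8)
The plan is to prove both parts by the same direct computation, exploiting the fact that $(g,h)$ being a Golay pair of length $m$ means $|g(z)|^2+|h(z)|^2$ equals the constant $2m$ (its value is the common constant coefficient $\normt{g}^2$ plus $\normt{h}^2$, i.e.\ $m+m$). Writing $\hat g$ for the modified polynomial in each case, I then read off the coefficients of $f(z)=|\hat g(z)|^2+|h(z)|^2$. The outcome I expect is that every nonconstant coefficient of $f$ has magnitude exactly $1$, so that $|f_s|\le 1$ for all nonzero $s$; since $f$ is nonconstant this rules out $(\hat g,h)$ being a Golay pair, and by \cref{Oliver} it is therefore a near-complementary pair of uneven length.

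First I would settle the length bookkeeping. Because $g$ and $h$ are Littlewood of length $m$, all their coefficients lie in $\{1,-1\}$. In part (i), $\hat g(z)=g(z)-g_{m-1}z^{m-1}$ has degree $m-2$ (as $g_{m-2}\ne 0$), hence length $m-1$, while $h$ keeps length $m$; here $\hat g\ne 0$ since its constant term $g_0\ne 0$, using $m>1$, and both polynomials are Littlewood. In part (ii), $\hat g(z)=g(z)+uz^m$ has degree $m$, hence length $m+1$, again paired with $h$ of length $m$, and remains Littlewood since $u\in\{1,-1\}$. In both cases the two lengths are uneven, as required.

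Next I would carry out the expansion. Using $|z|=1$ and $|g_{m-1}|=|u|=1$ on the unit circle, in part (i) one gets
\[
|\hat g(z)|^2 = |g(z)|^2 + 1 - g_{m-1}z^{m-1}\conj{g(z)} - \conj{g_{m-1}}z^{-(m-1)}g(z),
\]
so the Golay identity $|g|^2+|h|^2=2m$ yields
\[
f(z) = 2m+1 - g_{m-1}z^{m-1}\conj{g(z)} - \conj{g_{m-1}}z^{-(m-1)}g(z).
\]
For $0<s\le m-1$ the coefficient of $z^s$ comes only from the middle term and equals $-g_{m-1}\conj{g_{m-1-s}}$, of magnitude $1$, with the $z^{-s}$ coefficient its complex conjugate, while all coefficients with $|s|\ge m$ vanish (and $f_0=2m-1$, consistent with \cref{James}). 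Part (ii) is the identical computation with the cross terms $+uz^m\conj{g(z)}+uz^{-m}g(z)$ (recall $u$ is real), giving $z^s$-coefficient $u\,\conj{g_{m-s}}$ of magnitude $1$ for $0<s\le m$ and zero beyond.

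The remaining step is to conclude. In each case $|f_s|=1$ for every nonzero $s$ up to the smaller length and $f_s=0$ otherwise, so $|f_s|\le 1$ for all $s\ne 0$; since $f_1\ne 0$ the polynomial $f$ is nonconstant, hence $(\hat g,h)$ is not a Golay pair, and \cref{Oliver}—applied with $h$ playing the role of the longer polynomial in part (i)—certifies that $(\hat g,h)$ is a near-complementary pair of the stated uneven lengths. I do not anticipate a genuine obstacle: the computation is routine once the Golay identity collapses $|g|^2+|h|^2$ to a constant. The only points requiring care are the length bookkeeping (confirming that deleting or adjoining a single unit-modulus term changes the length by exactly one while preserving the Littlewood and nonzero conditions) and correctly matching the role assignment in \cref{Oliver} in part (i), where the shorter polynomial is written first.
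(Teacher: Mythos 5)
Your proof is correct and is essentially the paper's own argument: the paper proves this proposition by invoking its technical \cref{Brian}, whose proof is precisely your computation---expand the perturbed square, collapse $|g(z)|^2+|h(z)|^2$ to the constant $2m$ via the Golay condition, and observe that the two cross terms have unit-magnitude coefficients supported in nonnegative and nonpositive degrees, respectively. The only differences are cosmetic: you pin down the exact values $|f_s|=1$ and finish via \cref{Oliver} together with the nonconstancy of $f$, whereas the paper finishes via \cref{Yolanda} directly and deduces non-Golayness from the fact that the two polynomials in the pair have different lengths.
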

\begin{proof}
Define $f(z)=|g(z)-g_{m-1} z^{m-1}|^2+|h(z)|^2$ or $|g(z)+u z^m|^2+|h(z)|^2$, depending on which case we want to prove.
By \cref{Brian} we see that all nonconstant terms of $f(z)$ have coefficients of magnitude less than or equal to $1$.
Furthermore, the pairs we constructed from $g$ and $h$ involve nonzero Littlewood polynomials of different lengths, so they cannot be Golay pairs, and thus they are almost-complementary pairs.
\end{proof}
This proposition, along with \cref{David}, implies the existence of almost-complementary pairs of uneven lengths $2^a 10^b 26^c$ and $2^a 10^b 26^c \pm 1$ for every $a,b,c \geq 0$ (except not of uneven lengths $0$ and $1$, since all pairs with lengths $0$ and $1$ are trivial Golay pairs).
This accounts for our observation that there exist almost-complementary pairs of uneven lengths $m$ and $m+1$ with $m  \in \{1, 2, 3, 4, 7, 8, 9, 10, 15, 16, 19, 20, 25\}$.
The existence of almost-complementary pairs of uneven lengths $m$ and $m+1$ for all other $m$ with $1 \leq m \leq 25$ is not accounted for by \cref{Margaret}.
We do not know if there is any $m > 0$ such that an almost-complementary pair of uneven lengths $m$ and $m+1$ does not exist.

We may also ask the question about almost-complementary pairs of even length.
\begin{question}
For which even $m$ with $m > 0$ does there exist an almost-complementary pair of length $m$?
\end{question}
A glance at \cref{Yolanda} shows that the conditions we impose for almost-complementary pairs of even length are considerably more stringent than for those of odd length or uneven length.
Our data indicate that there do exist almost-complementary pairs of lengths $6$, $12$, and $24$, but not of lengths $14$, $18$, and $22$.
For any even $m \leq 52$ such that there is a Golay pair of length $m$, our computer search for Littlewood seeds producing minimum asymptotic merit factor found a seed of length $2 m$ that is an interleaving of a Golay pair, and so did not settle the question of existence of almost-complementary pairs at these lengths.
There do exist some almost-complementary pairs of even length $m$ such that a Golay pair of length $m$ exists: for example, $(1+z,1+z)$ is an almost-complementary pair of length $2$ and $(1+z+z^2-z^3,1-z-z^2+z^3)$ is an almost-complementary pair of length $4$.

We close with the technical lemma used to prove Propositions \ref{Hubert} and \ref{Margaret}.
\begin{lemma}\label{Brian}
Let $g(z), h(z) \in \C[z]$ be polynomials whose coefficients are complex numbers of magnitude less than or equal to $1$ and suppose that $(g,h)$ is a Golay complementary pair.
Let $a(z)=u z^j$ with $u \in \C$ and $j \geq \deg g$, and let $b=v z^k$ with $v \in \C$ and $k \geq \deg h$.
Then $f(z)=\sum_{s \in \Z} f_s z^s =|g(z)+a(z)|^2+|h(z)+b(z)|^2$ has $|f_s| \leq |u|+|v|$ for all nonzero $s$.
\end{lemma}
\begin{proof}
We note that
\begin{align*}
f
& = |g|^2+|a|^2+a\conj{g}+\conj{a}g+|h|^2+|b|^2+b\conj{h}+\conj{b}h \\
& = |g|^2+|h|^2+|a|^2+|b|^2+(a \conj{g}+b\conj{h})+(\conj{a}g+\conj{b}h).
\end{align*}
Now note that $|g|^2+|h|^2$ is a constant since $(g,h)$ is a Golay pair, and that $|a|^2$ and $|b|^2$ are constants since each of $a, b$ is either zero or a monomial with a complex coefficient.
Also note that $a \conj{g}$ can only have monomials of nonnegative degree and can only have coefficients of magnitude less than or equal to $|u|$.
Similarly $b\conj{h}$ can only have monomials of nonnegative degree and can only have coefficients of magnitude less than or equal to $|v|$.
So $a\conj{g}+b\conj{h}$ contributes monomials of nonnegative degree with coefficients of magnitude less than or equal to $|u|+|v|$.
And its conjugate $\conj{a}g+\conj{b}h$ contributes monomials of nonpositive degree with coefficients of magnitude less than or equal to $|u|+|v|$.
When we sum all contributions, we see that no $f_s$ with nonzero $s$ can have magnitude greater than $|u|+|v|$.
\end{proof}

\section*{Acknowledgements}

This research was done using computing resources provided by the Open Science Grid \cite{Pordes,Sfiligoi}, which is supported by the National Science Foundation award 1148698.
The authors thank Balamurugan Desinghu and Mats Rynge, who helped them set up the calculations on the Open Science Grid.
The authors also thank an anonymous reviewer, whose comments helped improve this paper.

\end{document}